\documentclass[english,13pt]{article}
\usepackage{tikz}
\usetikzlibrary{shapes.geometric, arrows, positioning}

\tikzstyle{input} = [circle, fill=blue!30, minimum size=1cm, text centered, font=\footnotesize, draw=black]
\tikzstyle{hidden} = [circle, fill=orange!30, minimum size=1cm, text centered, font=\footnotesize, draw=black]
\tikzstyle{output} = [circle, fill=green!30, minimum size=1cm, text centered, font=\footnotesize, draw=black]
\tikzstyle{conv} = [rectangle, fill=purple!30, minimum size=1cm, text centered, font=\footnotesize, draw=black]
\tikzstyle{pool} = [rectangle, fill=cyan!30, minimum size=1cm, text centered, font=\footnotesize, draw=black]
\tikzstyle{fc} = [rectangle, fill=orange!30, minimum size=1cm, text centered, font=\footnotesize, draw=black]
\tikzstyle{rnn} = [rectangle, fill=orange!30, minimum size=1cm, text centered, font=\footnotesize, draw=black]
\tikzstyle{generator} = [rectangle, fill=purple!30, minimum size=1cm, text centered, font=\footnotesize, draw=black]
\tikzstyle{discriminator} = [rectangle, fill=cyan!30, minimum size=1cm, text centered, font=\footnotesize, draw=black]
\tikzstyle{arrow} = [->, thick, draw=black]

\usepackage[T1]{fontenc}
\usepackage[latin1]{inputenc}
\usepackage{geometry}
\geometry{verbose,tmargin=2cm,bmargin=2cm,lmargin=2cm,rmargin=2cm}
\usepackage{float}
\usepackage{mathrsfs}
\usepackage{amsmath}
\usepackage{amssymb}
\usepackage{graphicx}
\usepackage{hyperref}
\usepackage[color=yellow]{todonotes}
\hypersetup{
    colorlinks=true,
    linkcolor=blue,
    filecolor=magenta,      
    urlcolor=cyan,
    citecolor = blue,
}

\usepackage{yhmath}
\makeatletter

\usepackage{bm}
\numberwithin{equation}{section}

\usepackage{algorithm}
 \usepackage{algorithmicx}
\floatstyle{ruled}
\newfloat{algorithm}{tbp}{loa}
\providecommand{\algorithmname}{Algorithm}
\floatname{algorithm}{\protect\algorithmname}



\usepackage{algorithm}
 \usepackage{algorithmicx}

\usepackage{amsthm}

\usepackage{mathrsfs}
\usepackage{amssymb}

\usepackage{amsfonts}

\usepackage{epsfig}

\usepackage{bm}

\usepackage{mathrsfs}

\usepackage{enumerate}

\@ifundefined{definecolor}{\@ifundefined{definecolor}
 {\@ifundefined{definecolor}
 {\usepackage{color}}{}
}{}
}{}

\usepackage{subfig}\usepackage[all]{xy}

\newtheorem{theorem}{Theorem}[section]
\newtheorem{lem}{Lemma}[section]

\newcounter{hypA}
\newenvironment{hypA}{\refstepcounter{hypA}\begin{itemize}
  \item[({\bf A\arabic{hypA}})]}{\end{itemize}}
\newcounter{hypB}

\newcounter{hypD}
\newenvironment{hypD}{\refstepcounter{hypD}\begin{itemize}
 \item[({\bf D\arabic{hypD}})]}{\end{itemize}}

\usepackage{babel}\date{}

\makeatother

\begin{document}

\begin{center}

{\Large \textbf{Parameter Estimation for Partially Observed McKean-Vlasov Diffusions}}

\vspace{0.5cm}

AJAY JASRA$^{1}$, MOHAMED MAAMA$^{2}$ \& RAUL TEMPONE$^{2,3}$

{\footnotesize $^{1}$School of Data Science,  The Chinese University of Hong Kong,  Shenzhen,  Shenzhen,  CN.}\\
{\footnotesize $^{2}$Applied Mathematics and Computational Science Program, \\ Computer, Electrical and Mathematical Sciences and Engineering Division, \\ King Abdullah University of Science and Technology, Thuwal, 23955-6900, KSA.} \\
{\footnotesize $^{3}$Chair of Mathematics for Uncertainty Quantification, 
RWTH Aachen University, 52062 Aachen, Germany.}\\
{\footnotesize E-Mail:\,} \texttt{\emph{\footnotesize ajayjasra@cuhk.edu.cn}}, 
\texttt{\emph{\footnotesize maama.mohamed@gmail.com}},  \\ \texttt{\emph{\footnotesize raul.tempone@kaust.edu.sa}}

\begin{abstract}
In this article we consider likelihood-based estimation of static parameters for a class of partially observed McKean-Vlasov (POMV) diffusion process with discrete-time observations over a fixed time interval.  In particular,  using the framework
of \cite{ub_grad_new} we develop a new randomized multilevel Monte Carlo method for estimating
the parameters,  based upon Markovian stochastic approximation methodology.   New Markov chain Monte Carlo
algorithms  for the POMV model  are introduced facilitating the application of \cite{ub_grad_new}.  We prove, under assumptions,  that the 
expectation of our estimator is biased,  but with expected small and controllable bias.  Our approach is implemented on several examples.
\\
\bigskip
\noindent \textbf{Keywords}: Parameter estimation, Markovian stochastic approximation,  Mckean-Vlasov
stochastic differential equations,  Multilevel Monte Carlo. 
\end{abstract}

\end{center}

\section{Introduction}

We consider the problem of parameter estimation for partially observed McKean-Vlasov (MV) stochastic differential equations (SDE). MV SDEs find a wide variety of applications in mathematics,  neuroscience and beyond; see for instance \cite{bald,crisan,fin,biol} and the references therein. In our context,  we assume that data are observed in discrete time and are conditionally independent given the position of the MV SDE at the observation time.  Given an overall fixed time window we seek to estimate the parameters of the overall model.

The problem that we are investigating is rather classical in the context of state-space models \cite{bain, cappe}.
In most of the literature that we are aware of,  concerning partially observed continuous-time processes,  much of the work regarding parameter estimation can be found in the context of regular diffusions or jump-diffusion processes;  see \cite{ub_grad_new,beskos,beskos1,chada_ub,graham,ub_grad,jasra_bpe_sde} for a non-exhaustive list of articles.  In most of the afore-mentioned articles,  in order to estimate the parameters of the model
one considers either a likelihood-based or Bayesian approach.  In either context one has to resort to numerical methods to estimate the parameters of the model,  often using Markov chain Monte Carlo (MCMC) in some guise or another. This is because the likelihood function for the model of interest is not available analytically; see for instance \cite{cappe} for a full review.
In the case of parameter estimation associated to partially observed MV SDEs we consider a likelihood-based approach.  In the case where the MV SDE has been fully observed there are several works (see \cite{nik,wen}
and the references therein) on parameter estimation,  but there seems few papers that have focussed on the partially observed case.  The challenges associated to this scenario, relative to the ordinary diffusion case,  is the fact that the coefficients of the MV SDE are typically unavailable,  due to the dependence on the typically unknown law of the MV SDE.  From a numerical perspective this introduces several additional challenges and this is what we intend to deal with in this article.

In order to describe the solution that is proposed,  analyzed and implemented in this article,  we begin by going into a little more detail on the model.  The main point which is apparent is that,  on deriving an appropriate equation for the likelihood function it becomes clear that it cannot be computed,  let alone a gradient, which can be leveraged for maximizing the likelihood.  In the ordinary diffusion case,  this can be simplified by simply employing time-discretization methods of the likelihood itself e.g.~\cite{beskos,ub_grad} and then either again time discretizing the diffusion (e.g.~by Euler-Maruyama) or using exact simulation methods (e.g.~\cite{beskos}).  As the latter seems rather challenging,  we focus on using the Euler-Maruyama method,  however, this itself does not lend itself to the MCMC approaches (for example) that have been used in the afore-mentioned references e.g.~\cite{graham,ub_grad}.  This is because even under time-discretization,  exact simulation of the resulting stochastic process in intractable and being able to do this is a core-property of most of the techniques that are often used in the literature.

In this paper we construct a new gradient-based method for estimating the static (i.e.~non time varying) parameters of a partially observed MV SDE model.  Our main starting point is the approach of \cite{ub_grad_new}, which was introduced in the case where the SDE is standard (elliptic SDE) and relies upon Markovian stochastic approximation (e.g.~\cite{andr}) and combining it with unbiased and multilevel Monte Carlo \cite{giles,giles1,hein,rhee} approaches.  It is particularly fast relative to some competing methods (e.g.~\cite{ub_grad}) whilst providing parameter estimates that are unbiased,  in that the estimators suffer from no time discretization bias, despite the fact that the approach only resorts to the application of time discretizations.  As the method is so-called embarrassingly parallel it can have optimal Monte Carlo convergence rates (see e.g.~the discussion in \cite{ub_pf}).

Our contributions are as follows:
\begin{itemize}
\item{We develop a new version of the approach of \cite{ub_grad_new} with a non-trivial extension to the partially observed MV SDE.}
\item{We show that the estimator,  under assumptions, is `close' in expectation to the true maximum likelihood estimator (MLE).}
\item{We implement our methodology,  demonstrating the efficacy of our approach.}
\end{itemize}
To expand further on the above bullets, in the context of the first,  based upon the ideas in \cite{po_mv} we develop new MCMC methods that can be used to implement the framework in \cite{ub_grad_new}.  In terms of the second,  for partially observed MV SDEs we terminate the maximal amount of time discretization that could be used, i.e.~how precise the Euler-Maruyama method is,  leading to a type of randomized MLMC algorithm. This in turn means that although we can show that our method is,  in expectation,  equal to a particular unique value,  that value is only likely to be `close' to the true MLE (i.e.~without time discretization).  

This article is structured as follows.  In Section \ref{sec:problem} we detail the estimation problem that is to be considered.  In Section \ref{sec:meth_theory} we present the methodology to be used as well as our theoretical results.  In Section \ref{sec:numerics} features our numerical results. The proofs associated to our theoretical results can be found in Appendix \ref{app:theory}.

\section{Problem Formulation}\label{sec:problem}

\subsection{Model}

We consider the stochastic differential equation (SDE) with $X_0=x_0\in\mathbb{R}^d$, $\theta\in\Theta$ fixed:
\begin{equation}\label{eq:sde}
dX_t = a_{\theta}\left(X_t,\overline{\xi}_{\theta}(X_t,\mu_{t,\theta})\right)dt + \sigma\left(X_t\right)dW_t
\end{equation}
where 
$$
\overline{\xi}_{\theta}(X_t,\mu_{t,\theta})  =  \int_{\mathbb{R}^d}\xi_{\theta}(X_t,x)\mu_{t,\theta}(dx)
$$
$\{W_t\}_{t\geq 0}$ is a standard $d-$dimensional Brownian motion,  for each $\theta\in\Theta\subset\mathbb{R}^{d_{\theta}}$,  $\xi_{\theta}:\mathbb{R}^{2d}\rightarrow\mathbb{R}$, $a_{\theta}:\mathbb{R}^d\times\mathbb{R}\rightarrow\mathbb{R}^d$, $\sigma:\mathbb{R}^d\rightarrow\mathbb{R}^{d\times d}$, $\mu_{t,\theta}$ is the law of $X_t$ and $\mu_{0,\theta}(dx)=\delta_{\{x_0\}}(dx)$ (Dirac measure on the set $\{x_0\}$ the starting point is taken as fixed).

We make the following assumption throughout the paper without further mention; it is known \cite{ver} that this ensures the existence of a strong solution.  Set $\mathcal{P}(\mathbb{R}^d)$  the probability measures on the measurable space $(\mathbb{R}^d,\mathcal{B}(\mathbb{R}^d))$ with $\mathcal{B}(\mathbb{R}^d)$ the Borel sets on $\mathbb{R}^d$. We write $\mathcal{C}_b^k(\mathbb{R}^{d_1},\mathbb{R}^{d_2})$ as the collection of $k-$times continuously differentiable functions from $\mathbb{R}^{d_1}$ to $\mathbb{R}^{d_2}$ with bounded derivatives of order 1 up-to $k$.  

\begin{hypD}
\begin{enumerate}
\item{For each $(\theta,\mu)\in\Theta\times\mathcal{P}(\mathbb{R}^d)$, $(a_{\theta}\left(\cdot,\overline{\xi}_{\theta}(\cdot,\mu)\right)
,\xi_{\theta}(\cdot))
\in\mathcal{C}_b^2(\mathbb{R}^{d+1},\mathbb{R}^d)\times\mathcal{C}_b^2(\mathbb{R}^{2d},\mathbb{R})$.}
\item{$\sigma(\cdot)\in\mathcal{C}_b^2(\mathbb{R}^{d},\mathbb{R}^{d\times d})$.}
\item{Set $\Sigma(x)=\sigma(x)\sigma(x)^{\top}$, then for any $x\in\mathbb{R}^d$, $\Sigma(x)$ is positive definite.}
\end{enumerate}
\end{hypD}

We denote by $P_{\mu_{t-1,\theta},t,\theta}(x_{t-1},dx_t)$ the conditional law of $X_t$ (as given in \eqref{eq:sde}) given $\mathscr{F}_{t-1}$ (the natual filtration of the process), for $t\geq 1$; that is, the transition kernel over unit time.  We consider a discrete time observation
process $Y_1,Y_2,\dots$,  $Y_t\in\mathsf{Y}$, that are assumed, for notational convenience, to be observed at unit times. Conditional on the position $X_t$, $t\in\mathbb{N}$ of \eqref{eq:sde}, the random variable $Y_t$ is assumed to be independent of all other random variables, with a bounded and positive probability density $G_{\theta}(x_t,y_t)$.

Let $\varphi:\mathbb{R}^d\rightarrow\mathbb{R}$ be a bounded and measurable function then we define the filtering expectation for $t\in\mathbb{N}$ as:
$$
\pi_{t,\theta}(\varphi) := \frac{\int_{\mathbb{R}^{dt}}\varphi(x_t)\left\{\prod_{p=1}^t G_{\theta}(x_p,y_p)\right\}\prod_{p=1}^tP_{\mu_{p-1,\theta},p,\theta}(x_{p-1},dx_p)}{\int_{\mathbb{R}^{dt}}\left\{\prod_{p=1}^t G_{\theta}(x_p,y_p)\right\}\prod_{p=1}^tP_{\mu_{p-1,\theta},p,\theta}(x_{p-1},dx_p)}.
$$
We remark that one does not need $\varphi$
and each of the $G_{\theta}(\cdot,y_t)$ to be bounded, but it will simplify the resulting exposition to do so.
For $T\in\mathbb{N}$ fixed, our objective is to maximize the log-likelihood w.r.t.~$\theta$:
$$
\log\left(p_{\theta}(y_1,\dots,y_T)\right) = \log\left(\int_{\mathbb{R}^{dT}}
\left\{\prod_{p=1}^T G_{\theta}(x_p,y_p)\right\}\prod_{p=1}^T P_{\mu_{p-1,\theta},p,\theta}(x_{p-1},dx_p)\right).
$$
We use $\mathbb{E}_{\theta}[\cdot]$ to denote expectations w.r.t.~the law diffusion process,  ($\mathbb{P}_{\theta}$),  so we can write
$$
p_{\theta}(y_1,\dots,y_T) = \mathbb{E}_{\theta}\left[\left\{\prod_{p=1}^T G_{\theta}(X_p,y_p)\right\}\right].
$$

By using the Cameron-Martin theorem:
$$
\mathbb{E}_{\theta}\left[\left\{\prod_{p=1}^T G_{\theta}(X_p,y_p)\right\}\right] = \mathbb{E}_{\mathbb{Q}}\left[
\left\{\prod_{p=1}^T G_{\theta}(X_p,y_p)\right\}
\frac{d\mathbb{P}_{\theta}}{d\mathbb{Q}}\right]
$$
where 
$$
\frac{d\mathbb{P}_{\theta}}{d\mathbb{Q}} = \exp\Big\{-\frac{1}{2}\int_{0}^T \|b_{\theta}\left(X_s,\overline{\xi}_{\theta}(X_s,\mu_{s,\theta})\right)\|^2ds + \int_{0}^{T}
b_{\theta}\left(X_s,\overline{\xi}_{\theta}(X_s,\mu_{s,\theta})\right)^{\top}d\widetilde{W}_s\Big\}
$$
$\mathbb{Q}$ is a probability  on the path $\{X_t\}_{t\geq 0}$,
$$
b_{\theta}\left(X_s,\overline{\xi}_{\theta}(X_s,\mu_{s,\theta})\right) = \Sigma(X_s)^{-1}\sigma(X_s)^{\top}
a_{\theta}\left(X_s,\overline{\xi}_{\theta}(X_s,\mu_{s,\theta})\right)
$$ 
is a $d-$vector and under $\mathbb{Q}$, $\{X_t\}_{t>0}$ solves $dX_t = \sigma(X_t)d\widetilde{W}_t$ where 
$\{\widetilde{W}_t\}_{t\geq 0}$ is a standard $d-$dimensional Brownian motion under $\mathbb{Q}$.  It is assumed that $a_{\theta},G_{\theta}$ and $\sigma$ are such that $\left\{\prod_{p=1}^T G_{\theta}(x_p,y_p)\right\}
\frac{d\mathbb{P}_{\theta}}{d\mathbb{Q}}$ is $\mathbb{Q}-$integrable for each fixed $T\geq 0$.
As it will be useful below, we write
$$
\frac{d\mathbb{P}_{\theta}}{d\mathbb{Q}} = \exp\Big\{-\frac{1}{2}\int_{0}^T \|b_{\theta}\left(X_s,\overline{\xi}_{\theta}(X_s,\mu_{s,\theta})\right)\|^2ds + \int_{0}^{T}b_{\theta}\left(X_s,\overline{\xi}_{\theta}(X_s,\mu_{s,\theta})\right)^{\top}
\Sigma(X_s)^{-1}\sigma(X_s)^{\top} dX_s\Big\}.
$$
Now we shall explicitly assume: 
\begin{quote}
$G_\theta$ and $\frac{d\mathbb{P}_{\theta}}{d\mathbb{Q}}$ are differentiable w.r.t.~$\theta$.
\end{quote}
Then one has, under very minor regularity conditions that
\begin{equation}\label{eq:grad_like}
\nabla_{\theta}\log\left\{
\mathbb{E}_{\theta}\left[\left\{\prod_{p=1}^T G_{\theta}(X_p,y_p)\right\}\right]
\right\} = \mathbb{E}_{\overline{\mathbb{P}}_{\theta}}\left[\nabla_{\theta}\log\left(
\left\{\prod_{p=1}^T G_{\theta}(X_p,y_p)\right\}
\frac{d\mathbb{P}_{\theta}}{d\mathbb{Q}}
\right)\right]
\end{equation}
where 
$$
\overline{\mathbb{P}}_{\theta}\left(d\{x_s\}_{s\in[0,T]}\right)= \frac{\left\{\prod_{p=1}^T G_{\theta}(x_p,y_p)\right\}\mathbb{P}_{\theta}\left(d\{x_s\}_{s\in[0,T]}\right)}{p_{\theta}(y_1,\dots,y_T)}.
$$ 
We denote by $\theta_{\star}^L\rightarrow\theta^{\star}$ is the (assumed) unique maximizer of the continuous-time likelihood.

This is a construction developed in \cite{ub_grad} for regular SDEs and we shall show how 
the methodology in \cite{ub_grad_new,po_mv} can be extended to this case.   
We remark that in the expression
$$
\nabla_{\theta}\log\left(
\left\{\prod_{p=1}^T G_{\theta}(X_p,y_p)\right\}
\frac{d\mathbb{P}_{\theta}}{d\mathbb{Q}}
\right)
$$
one must compute derivatives of $a_{\theta}$ along the way and this in turn depends on the gradient of the law $\mu_{t,\theta}$.  In practice we will use finite difference approximations and this is discussed later on.

\subsection{Time Discretization}

In practice,  one cannot work with the continuous-time formulation above,  so we shall consider a standard time-discretization and an associated optimization problem.  This in turn will be approximated by using numerical methods,  which will be the topic of Section \ref{sec:meth_theory}.

Let $l\in\mathbb{N}_0=\mathbb{N}\cup\{0\}$ be given and set $\Delta_l=2^{-l}$.  We consider the first order Euler-Maruyama time discretization of \eqref{eq:sde} for $k\in\{0,1\dots,\Delta_l^{-1}T-1\}$,  $\widetilde{x}_0=x_0$ (the starting position of \eqref{eq:sde} is a known point $x_0$)
\begin{equation}\label{eq:sde_disc}
\widetilde{X}_{(k+1)\Delta_l} = \widetilde{X}_{k\Delta_l} + a_{\theta}\left(\widetilde{X}_{k\Delta_l},\overline{\xi}_{\theta}(\widetilde{X}_{k\Delta_l},\mu_{k\Delta_l,\theta}^l)\right)dt + \sigma\left(\widetilde{X}_{k\Delta_l}\right)[W_{(k+1)\Delta_l}-W_{k\Delta_l}]
\end{equation}
where $\mu_{k\Delta_l,\theta}^l$ denotes the law of the time discretized process at time $k\Delta_l$. 
Recall $P_{\mu_{t-1,\theta},t,\theta}(x_{t-1},dx_t)$ ($t\in\{1,\dots,T\}$) from the previous section: we denote by
$P_{\mu_{t-1,\theta}^l,t,\theta}^l(x_{t-1},dx_t)$
the conditional law of $\widetilde{X}_t$ (as given in \eqref{eq:sde_disc}); the time discretized transition kernel over unit time. 

We can now define a time discretized version of the log-likelihood (c.f.~\eqref{eq:grad_like}) and its gradient in the following manner.  We have
$$
\log(p_{\theta}^l(y_1,\dots,y_T)) := \log\left(
\int_{\mathbb{R}^{dT}} \left\{\prod_{k=1}^TG_{\theta}(x_k,y_k) \right\} \prod_{k=1}^T
P_{\mu_{k-1,\theta}^l,k,\theta}^l(x_{k-1},dx_k)
\right).
$$
Set
\begin{eqnarray}
H_l(\theta,\widetilde{x}_0,\widetilde{x}_{\Delta_l},\dots,\widetilde{x}_T) & = & \sum_{k=1}^T \nabla_{\theta}\log\left\{G_{\theta}(\widetilde{x}_k,y_k)\right\} +
\sum_{k=0}^{\Delta_l^{-1}T-1}\Bigg\{-\frac{\Delta_l}{2}\nabla_{\theta}\|b_{\theta}\left(\widetilde{x}_{k\Delta_l},\overline{\xi}_{\theta}(\widetilde{x}_{k\Delta_l},\mu_{k\Delta_l,\theta}^l)\right)\|^2 + \nonumber\\ & & \nabla_{\theta}
\{b_{\theta}\left(\widetilde{x}_{k\Delta_l},\overline{\xi}_{\theta}(\widetilde{x}_{k\Delta_l},\mu_{k\Delta_l,\theta}^l)\right)^{\top}
\Sigma(\widetilde{x}_{k\Delta_l})^{-1}\sigma(\widetilde{x}_{k\Delta_l})^{\top}[\widetilde{x}_{(k+1)\Delta_l}-\widetilde{x}_{k\Delta_l}]\}
\Bigg\}\label{eq:Hl_def}
\end{eqnarray}
Then the gradient of  time discretized version of the log-likelihood denoted $\nabla_{\theta}\log(p_{\theta}^l(y_1,\dots,y_T))$ is equal to
\begin{equation}\label{eq:ll_grad_disc}
\nabla_{\theta}\log(p_{\theta}^l(y_1,\dots,y_T)) = \frac{\mathbb{E}_{\theta}
\left[\left\{\prod_{k=1}^TG_{\theta}(\widetilde{X}_k,y_k) \right\}
H_l(\theta,\widetilde{X}_0,\widetilde{X}_{\Delta_l},\dots,\widetilde{X}_T)
\right]
}{\mathbb{E}_{\theta}
\left[\left\{\prod_{k=1}^TG_{\theta}(\widetilde{X}_k,y_k) \right\}\right]}.
\end{equation}
The objective is then to maximize $\log(p_{\theta}^l(y_1,\dots,y_T))$ by using the gradient defined above.  We shall see that we will be able to do this task up-to some further approximation which shall be explained in Section 
\ref{sec:meth_theory}.

\section{Methodology and Theory}\label{sec:meth_theory}

\subsection{Introduction}

We now describe the methodology that we will use to obtain parameter estimates of $\theta$.  We will introduce a non-trivial extension of the approach in \cite{ub_grad_new}.  The main complication that we need to address is the fact that numerical simulation of \eqref{eq:sde_disc} is not feasible without the approximating the laws $\mu_{k\Delta_l,\theta}^l$.  The resulting algorithms, as used in \cite{ub_grad_new},  then need to be adapted and appropriately analyzed.
To that end we begin in Section \ref{sec:law_approx} by describing the well-known method of \cite{basic_method} for precisely approximating $\mu_{k\Delta_l,\theta}^l$.  Then as \eqref{eq:ll_grad_disc} can be thought of a type of expectation w.r.t.~a smoothing distribution,  we show how such an expectation can be approximated using MCMC.  The details are given in Section \ref{sec:smooth_approx} where the notion of a smoothing distribution is clarified. 
We then show how the differences of expectations of smoothers,  associated to two different levels of time discretization can be approximated by coupled MCMC (Section \ref{sec:smooth_pair_approx}); such a concept is needed in the sequel. We then describe the strategy for optimization based on stochastic approximation (Section \ref{sec:sa}) and our final algorithm (Section \ref{sec:fa}).  The algorithm is discussed in Section \ref{sec:disc_meth}, where several of the algorithmic choices are analyzed.
Our theoretical results are given in Section \ref{sec:theory}.  From herein we remove the $\widetilde{\cdot}$ notation as we are now working only with a time discretized process.  Throughout the section the level $l$ of time discretization will be fixed unless otherwise stated.

\subsection{Approximating the Laws}\label{sec:law_approx}

We now detail the method of \cite{basic_method} for allowing one to approximate the law
$\mu_{t,\theta}^l$ for each $t\in\{1,\dots,T\}$.  To that end we will need the notation that
$\mathcal{N}_d(c,\Sigma)$ denotes the $d-$dimensional Gaussian distribution with mean $c$ and covariance matrix $\Sigma$. $I_d$ is the $d\times d$
identity matrix and $\stackrel{\textrm{ind}}{\sim}$ denotes independently distributed as.
The approach is given in Algorithm \ref{alg:basic_method} from \cite{basic_method}.   
Note that in Step 2.~when $k=1$ the points $X_{t-1}^1,\dots,X_{t-1}^N$ are available via the empirical
measure that has to be specified in Step 1..
We remark that
Algorithm \ref{alg:basic_method} can be used to approximate expectations w.r.t.~$\mu_{t,\theta}^l$
and indeed on the grid in-between time $t-1$ and $t$.    

\begin{algorithm}[h]
\begin{enumerate}
\item{Input $l\in\mathbb{N}_0$ the level of discretization, $N\in\mathbb{N}$ the number of particles, $\theta\in\Theta$,  $t\in\{1,\dots,T\}$. If $t=1$ set $\mu_{0,\theta}^{l,N}(dx)=\delta_{\{x_0\}}(dx)$ otherwise input an empirical measure $\mu_{t-1,\theta}^{l,N}(dx)=\tfrac{1}{N}\sum_{i=1}^N\delta_{\{X_{t-1}^i\}}(dx)$. Set $k=1$.}
\item{For $i\in\{1,\dots,N\}$ generate:
\begin{align*}
X_{t-1+k\Delta_l}^i & =  X_{t-1+(k-1)\Delta_l}^i + a_{\theta}\left(X_{t-1+(k-1)\Delta_l}^i,\overline{\xi}_{\theta}(X_{t-1+(k-1)\Delta_l}^i,\mu_{t-1+(k-1)\Delta_l,\theta}^{l,N})\right) + \\ & \sigma\left(X_{t-1+(k-1)\Delta_l}^i\right)\left[W_{t-1+k\Delta_l}^i - W_{t-1+(k-1)\Delta_l}^i\right]
\end{align*}
where
\begin{eqnarray*}
\overline{\xi}_{\theta}(X_{t-1+(k-1)\Delta_l}^i,\mu_{t-1+(k-1)\Delta_l,\theta}^{l,N}) & = & \frac{1}{N}\sum_{j=1}^N \xi_{\theta}(X_{t-1+(k-1)\Delta_l}^i,X_{t-1+(k-1)\Delta_l}^j)\\
\mu_{t-1+(k-1)\Delta_l,\theta}^{l,N}(dx) & = & \frac{1}{N}\sum_{j=1}^N\delta_{\{X_{t-1+(k-1)\Delta_l}^j\}}(dx) \\
\left[W_{t-1+k\Delta_l}^i - W_{t-1+(k-1)\Delta_l}^i\right] & \stackrel{\textrm{ind}}{\sim} & \mathcal{N}_{d}(0,\Delta_l I_d).
\end{eqnarray*}
Set $k=k+1$, if $k=\Delta_l^{-1}+1$ go to step 3.~otherwise go to the start of step 2..}
\item{Output all the required laws $\mu_{t-1+\Delta_l,\theta}^N,\dots,\mu_{t,\theta}^N$.}
\end{enumerate}
\caption{Approximating the Laws when starting with a particle approximation at time $t-1$, $t\in\{1,\dots,T\}$.}
\label{alg:basic_method}
\end{algorithm}

\subsection{Approximating the Smoother}\label{sec:smooth_approx}

\subsubsection{Modified Objective Function}\label{sec:mod_obj}

We begin with some explanation of what we mean by the smoother in this context.  We write
$Q_{\mu,\theta}^l(x,dy)$ as the Gaussian Markov kernel on $(\mathbb{R}^d,\mathcal{B}(\mathbb{R}^d))$ associated to a $\Delta_l$ time step
of \eqref{eq:sde_disc},  with input measure $\mu\in\mathcal{P}(\mathbb{R}^d)$. That is,  for any $t\in\{1,\dots,T\}$, we have
$$
P_{\mu_{t-1,\theta}^l,t,\theta}^l(x_{t-1},dx_t) = \int_{\mathbb{R}^{d(\Delta_l^{-1}-1)}}
\prod_{k=1}^{\Delta_{l}^{-1}} Q_{\mu_{t-1+(k-1)\Delta_l,\theta}^l,\theta}^l(x_{t-1+(k-1)\Delta_l},dx_{t-1+k\Delta_l}).
$$
Then we can write the gradient of the log-likelihood in \eqref{eq:ll_grad_disc} as an expectation w.r.t.~a probability measure which we will call the smoother.  Set $u_t=(x_{t-1+\Delta_l},\dots,x_t)$,  
and write
$$
\overline{P}_{\mu_{t-1,\theta}^l,t,\theta}^l(x_{t-1},du_t) := \prod_{k=1}^{\Delta_{l}^{-1}} Q_{\mu_{t-1+(k-1)\Delta_l,\theta}^l,\theta}^l(x_{t-1+(k-1)\Delta_l},dx_{t-1+k\Delta_l})
$$
then the smoother is
$$
\overline{\pi}^l_{\theta}\left(d(u_1,\dots,u_T)\right) := \frac{\left\{\prod_{k=1}^T G_{\theta}(x_k,y_k)\right\}
\prod_{k=1}^T \overline{P}_{\mu_{t-1,\theta}^l,t,\theta}^l(x_{t-1},du_t)
}{\int_{\mathsf{E}_l^T}
\left\{\prod_{k=1}^T G_{\theta}(x_k,y_k)\right\}
\prod_{k=1}^T \overline{P}_{\mu_{t-1,\theta}^l,t,\theta}^l(x_{t-1},du_t)
}
$$
where $\mathsf{E}_l=\mathbb{R}^{d\Delta_l^{-1}}$.  It then follows that
$$
\nabla_{\theta}\log(p_{\theta}^l(y_1,\dots,y_T)) = \int_{\mathsf{E}_l^T}
H_l(\theta,u_1,\dots,u_T)
\overline{\pi}_{\theta}^l\left(d(u_1,\dots,u_T)\right)
$$
where we have surpressed $x_0$ from the notation for $H_l$.

Ideally,  our objective would now to be to sample from $\overline{\pi}^l$ and then one can compute a Monte Carlo
approximation of $\nabla_{\theta}\log(p_{\theta}^l(y_1,\dots,y_T))$.  The issue here is that, of course,  even if the the laws of the SDE were known,  one has to resort to MCMC,  however, we have the further complication that we
cannot use MCMC to sample from $\overline{\pi}^l$.  In what follows we shall modify the problem somewhat,  in that we will consider a modified gradient expression which we will now describe,  in such a way that we are able to use MCMC to approximate expectations w.r.t.~a modified smoother and that the `bias' of the smoother can be controlled by a simulation parameter.  

Consider sequentially sampling Algorithm \ref{alg:basic_method} from time
$1$ to time $T$ using the approximated laws from the previous time step,  to initialize the next time step; throughout $N$ is fixed and so is $\theta$.
At time $t$, Write all the simulated random variables $(X_{t-1+\Delta_l}^1,\dots,X_{t-1+\Delta_l}^N),\dots,
(X_{t}^1,\dots,X_{t}^N)$ from Algorithm \ref{alg:basic_method} as $\overline{u}_t=u_t^{1:N}\in\mathsf{E}_l^N$.
Finally write the joint probability of $(\overline{u}_1,\dots,\overline{u}_T)$ as $\overline{\mathbb{P}}_{\theta}^N$
and associated expectation $\overline{\mathbb{E}}_{\theta}^N$. 
We will write for $t\in\{1,\dots,T\}$
$$
\overline{P}_{\mu_{t-1,\theta}^{l,N},t,\theta}^l(x_{t-1},du_t) =  \prod_{k=1}^{\Delta_{l}^{-1}} Q_{\mu_{t-1+(k-1)\Delta_l,\theta}^{l,N},\theta}^l(x_{t-1+(k-1)\Delta_l},dx_{t-1+k\Delta_l}).
$$
We will now consider an MCMC method to sample from
$$
\overline{\pi}^{l,N}_{\theta}\left(d(u_1,\dots,u_T)\right) = \frac{
\left\{\prod_{k=1}^T G_{\theta}(x_k,y_k)\right\}
\overline{\mathbb{E}}_{\theta}^N\left[\prod_{t=1}^T \overline{P}_{\mu_{t-1,\theta}^{l,N},t,\theta}^l(x_{t-1},du_t)
\right]
}{
\int_{\mathsf{E}_l^T}
\left\{\prod_{k=1}^T G_{\theta}(x_k,y_k)\right\}
\overline{\mathbb{E}}_{\theta}^N\left[
\prod_{t=1}^T\overline{P}_{\mu_{t-1,\theta}^{l,N},t,\theta}^l(x_{t-1},du_t)
\right]
}.
$$
Subequently,  we will focus on approximation the gradient of
$$
\log(p_{\theta}^{l,N}(y_1,\dots,y_T)) = \int_{\mathsf{E}_l^T}
\left\{\prod_{k=1}^T G_{\theta}(x_k,y_k)\right\}
\overline{\mathbb{E}}_{\theta}^N\left[
\prod_{t=1}^T \overline{P}_{\mu_{t-1,\theta}^{l,N},t,\theta}^l(x_{t-1},du_t)
\right]
$$
via the expression
\begin{equation}\label{eq:mod_grad_ll}
\nabla_{\theta}\log(p_{\theta}^{l,N}(y_1,\dots,y_T)) = \int_{\mathsf{E}_l^T}
\overline{\mathbb{E}}_{\theta}^N\left[H_l^N(\theta,s_1,\dots,s_T)\right]
\overline{\pi}^{l,N}_{\theta}\left(d(u_1,\dots,u_T)\right)
\end{equation}
where $s_t=(u_t,\overline{u}_t)$,  $t\in\{1,\dots,T\}$ and $H_l^N$ is as \eqref{eq:Hl_def}, except we have plugged in the empirical measures that have been produced via Algorithm \ref{alg:basic_method} .    We 
note that in $H_l^N$ 
we will need some further approximation 
constructed via $\overline{\mathbb{P}}_{\theta}^N\left(d(\overline{u}_1,\dots,\overline{u}_T)\right)$
to compute it
and hence \eqref{eq:mod_grad_ll} \emph{will be further modified} and is explained in Section \ref{sec:hl_comp} below.
For now we focus on sampling from $\overline{\pi}^{l,N}_{\theta}$ and suppose that $H_l^N$ can be computed.

\subsubsection{MCMC Method}

We present our MCMC method to approximate expectations w.r.t.~$\overline{\pi}^{l,N}_{\theta}$ in Algorithm 
\ref{alg:cond_pf_0}.  The approach detailed in Algorithm  \ref{alg:cond_pf_0} describes how to simulate a Markov kernel 
$K_{l,\theta}:\mathsf{E}_l^T\rightarrow\mathcal{P}(\mathsf{E}_l^T)$ and is a new type of conditional particle filter that was originally developed in \cite{andrieu}.  
The main innovation is Step 2.~of Algorithm  \ref{alg:cond_pf_0} which inputs a sequential version of Algorithm \ref{alg:basic_method} to approximate the laws in the kernels 
$Q_{\mu_{(k-1)\Delta_l,\theta}^{l},\theta}^l$ and is indeed simply a conditional particle filter that was used in
\cite{po_mv}.
We remark that in practice it is likely that one simulates the approximation of the measures (Step 2.~of Algorithm  \ref{alg:cond_pf_0}) sequentially in each time step of the main algorithm (i.e.~Step 3.~of Algorithm  \ref{alg:cond_pf_0}),  but we wanted to make it clear that we are simulating from 
$\overline{\mathbb{P}}_{\theta}^N$
independently of all other random variables that are generated in Algorithm  \ref{alg:cond_pf_0}.

Using the approach that was developed in \cite{andrieu},  one can verify that under minimal conditions that the invariant measure of the kernel $K_{l,\theta}$ is exactly $\overline{\pi}^{l,N}_{\theta}$.  As a result one can approximate \eqref{eq:mod_grad_ll} by iteratively simulating $K_{l,\theta}$ and $\overline{\mathbb{P}}_{\theta}^N$; this however is more cumbersome than will be needed and we will use the kernel in a less expensive manner. 
Note that it does not make sense to use the variables $(\overline{u}_1,\dots,\overline{u}_T)$
that are simulated inside $K_{l,\theta}$ to approximate $H_l$; this is because the marginal distribution
of such variables,  under the invariant measure of the Markov chain on an extended space \emph{is no longer}
$\overline{\mathbb{P}}_{\theta}^N$.
We remark the cost of applying
$K_{l,\theta}$ is $\mathcal{O}(\Delta_l^{-1}N^2T)$ in step 2.~of Algorithm  \ref{alg:cond_pf_0} and then
the rest of the algorithm has a cost of $\mathcal{O}(\Delta_l^{-1}NMT)$.

\begin{algorithm}[h]
\begin{enumerate}
\item{Input $l\in\mathbb{N}_0$ the level of discretization, $N\in\mathbb{N}$ the number of particles for the approach of Algorithm \ref{alg:basic_method},  $M$ the number of particles for the main method, $\theta\in\Theta$, 
and $(U_{1}',\dots,U_T')\in\mathsf{E}_l^T$. }
\item{Sample $(\overline{u}_1,\dots,\overline{u}_T)$ from $\overline{\mathbb{P}}_{\theta}^N$ via a sequential
application of Algorithm \ref{alg:basic_method}, with $N$ particles.}
\item{Initialize: Sample $U_1^i$ independently from $
\prod_{k=1}^{\Delta_{l}^{-1}} Q_{\mu_{(k-1)\Delta_l,\theta}^{l,N},\theta}^l(x_{(k-1)\Delta_l},dx_{k\Delta_l})$, 
$A_{0}^i=i$ for $i\in\{1,\dots,M-1\}$. Set $k=2$.}
\item{Sampling: for $i\in\{1,\dots,M-1\}$ sample $U_{k}^i|U_{k-1}^{A_{k-1}^i}$ using the kernel 
$$ 
\prod_{k=1}^{\Delta_{l}^{-1}} Q_{\mu_{t-1+(k-1)\Delta_l,\theta}^{l,N},\theta}^l(x_{t-1+(k-1)\Delta_l},dx_{t-1+k\Delta_l})
$$
where $x_{t-1}^i=x^{A_{k-1}^i}$.
Set $U_{k}^N=U_k'$ and for $i\in\{1,\dots,M-1\}$, $(U_{1}^i,\dots,U_k^i)=(U_1^{A_{k-1}^i},\dots,U_{k-1}^{A_{k-1}^i},U_{k}^i)$. If $k=T$ go to 4..}
\item{Resampling: Construct the probability mass function on $\{1,\dots,M\}$:
$$
r_1^i = \frac{G_{\theta}(x_k^i,y_k)}{\sum_{j=1}^MG_{\theta}(x_k^j,y_k)}.
$$
For $i\in\{1,\dots,M-1\}$ sample $A_k^i$ from $r_1^i$. Set $k=k+1$ and return to the start of 2..}
\item{Construct the probability mass function on $\{1,\dots,M\}$:
$$
r_1^i = \frac{G_{\theta}(x_T^i,y_T)}{\sum_{j=1}^M G_{\theta}(x_T^j,y_T)}.
$$
Sample $i\in\{1,\dots,M\}$ using this mass function and return $(U_1^i,\dots,U_T^i)$.}
\end{enumerate}
\caption{Conditional Particle Filter at level $l\in\mathbb{N}_0$.}
\label{alg:cond_pf_0}
\end{algorithm}

\subsection{Approximating Pairs of Smoothers}\label{sec:smooth_pair_approx}

We will now introduce a method that allows us to approximate the difference:
\begin{equation}\label{eq:ll_diff}
\nabla_{\theta}\log(p_{\theta}^{l,N_l}(y_1,\dots,y_T)) - 
\nabla_{\theta'}\log(p_{\theta'}^{l,N_{l-1}}(y_1,\dots,y_T))
\end{equation}
where the number of samples used in the particle approximations of the laws can be different as can be the parameter values.  Note as discussed before some further work is needed for calculating $H_l^N$,  so this is not the exact thing that is done in practice, but for now we proceed as if $H_l^N$ can be computed exactly.
We remark that the task of computing \eqref{eq:ll_diff} is of course possible to approximate this difference by using two independent runs associated to the kernels in 
Algorithm \ref{alg:cond_pf_0}.  However,  as we instrinsically will be relying on randomized multilevel Monte Carlo methods,  it is important that our estimators are suitably dependent;  as this is now well-understood in the literature we direct the reader to \cite{giles,giles1,hein,ml_rev} for a full explanation of why this should be the case.

The method that we describe concerns simulating a coupling of the pair of Markov kernels $(K_{l,\theta}(u_{1:T},\cdot),$ $K_{l-1,\theta'}(u'_{1:T},\cdot)$ where $u_{1:T}=(u_1,\dots,u_T)\in\mathsf{E}_l^T$,  $u_{1:T}'=(u_1',\dots,u_T')\in\mathsf{E}_{l-1}^T$.  If one can sample such a coupling,  then we can approximate \eqref{eq:ll_diff}
and this is exactly the subject of this section.  As we saw in Algorithm \ref{alg:cond_pf_0},  simulating the kernels
$K_{l,\theta}$ and $K_{l-1,\theta'}$ entails using Algorithm \ref{alg:basic_method} and as we seek to correlate the simulation,  we need a correlated version of Algorithm \ref{alg:basic_method}; this was solved in \cite{po_mv} and we present the same algorithm,  with notational changes in Algorithm \ref{alg:basic_method_coup}.

As in Section \ref{sec:mod_obj}, consider sequentially sampling Algorithm \ref{alg:basic_method_coup} from time
$1$ to time $T$ using the approximated laws from the previous time step,  to initialize the next time step; throughout $N_l,N_{l-1}$ is fixed and so is $\theta,\theta'$.
At time $t$,  write all the simulated random variables from Algorithm \ref{alg:basic_method_coup} as 
$\overline{u}_t\in\mathsf{E}_l^{N_l}$,  $\widetilde{u}_t\in\mathsf{E}_{l-1}^{N_{l-1}}$.
Finally write the joint probability of $(\overline{u}_1,\dots\overline{u}_T)$ and $(\widetilde{u}_1,\dots,\widetilde{u}_T)$  as $\overline{\mathbb{P}}_{\theta,\theta'}^{N_l,N_{l-1}}$.  This provides a means to sample a coupling of
$(\overline{\mathbb{P}}_{\theta}^{N_l},\overline{\mathbb{P}}_{\theta'}^{N_{l-1}})$ which we shall use in the 
the simulation of the coupling of $(K_{l,\theta}(u_{1:T},\cdot),K_{l-1,\theta'}(u'_{1:T},\cdot)$; this is the next task.

\begin{algorithm}[h]
\begin{enumerate}
\item{Input $l\in\mathbb{N}$ the level of discretization, $(N_l,N_{l-1})\in\mathbb{N}$ the number of particles
at levels $l$ and $l-1$ with $N_l\geq N_{l-1}$, $(\theta,\theta')\in\Theta^2$
,  $t\in\{1,\dots,T\}$. If $t=1$ set $\mu_{0,\theta}^{l,N_{l}}(dx)=\widetilde{\mu}_{0,\theta'}^{l-1,N_{l-1}}(dx)=\delta_{\{x_0\}}(dx)$ otherwise input a pair of empirical measures $\mu_{t-1,\theta}^{l,N_l}(dx)=\tfrac{1}{N_l}\sum_{i=1}^{N_l}\delta_{\{X_{t-1}^{l,i}\}}(dx)$, 
$\widetilde{\mu}_{t-1,\theta'}^{l-1,N_{l-1}}(dx)=\tfrac{1}{N_{l-1}}\sum_{i=1}^{N_{l-1}}\delta_{\{\widetilde{X}_{t-1}^{l-1,i}\}}(dx)$.  Set $k=1$.}
\item{For $i\in\{1,\dots,N_l\}$ generate:
\begin{align*}
X_{t-1+k\Delta_l}^{l,i} &  =  X_{t-1+(k-1)\Delta_l}^{l,i} + a_{\theta}\left(X_{t-1+(k-1)\Delta_l}^{l,i},\overline{\xi}_{\theta}(X_{t-1+(k-1)\Delta_l}^{l,i},\mu_{t-1+(k-1)\Delta_l,\theta}^{l,N_l})\right) + \\ & \sigma\left(X_{t-1+(k-1)\Delta_l}^{l,i}\right)\left[W_{t-1+k\Delta_l}^i - W_{t-1+(k-1)\Delta_l}^i\right]
\end{align*}
where
\begin{eqnarray*}
\overline{\xi}_{\theta}(X_{t-1+(k-1)\Delta_l}^i,\mu_{t-1+(k-1)\Delta_l}^{l,N_l}) & = & \frac{1}{N_l}\sum_{j=1}^{N_l} \xi_{\theta}(X_{t-1+(k-1)\Delta_l}^{l,i},X_{t-1+(k-1)\Delta_l}^{l,j})\\
\mu_{t-1+(k-1)\Delta_l,\theta}^{l,N_l}(dx) & = & \frac{1}{N_l}\sum_{j=1}^{N_l}\delta_{\{X_{t-1+(k-1)\Delta_l}^{l,j}\}}(dx).
\end{eqnarray*}
Set $k=k+1$, if $k=\Delta_l^{-1}+1$ go to step 3.~otherwise go to the start of step 2..}
\item{For $i\in\{1,\dots,N_{l-1}\}$ compute:
\begin{align*}
\widetilde{X}_{t-1+k\Delta_{l-1}}^{l-1,i} & =  \widetilde{X}_{t-1+(k-1)\Delta_{l-1}}^{l-1,i} + a_{\theta'}\left(\widetilde{X}_{t-1+(k-1)\Delta_{l-1}}^{l-1,i},\overline{\xi}_{\theta'}(\widetilde{X}_{t-1+(k-1)\Delta_{l-1}}^{l-1,i},\widetilde{\mu}_{t-1+(k-1)\Delta_{l-1},\theta'}^{l-1,N_{l-1}})\right) + \\ & \sigma\left(\widetilde{X}_{t-1+(k-1)\Delta_{l-1}}^{l-1,i}\right)\left[W_{t-1+k\Delta_{l-1}}^i - W_{t-1+(k-1)\Delta_{l-1}}^i\right]
\end{align*}
where
\begin{eqnarray*}
\overline{\xi}_{\theta'}(\widetilde{X}_{t-1+(k-1)\Delta_{l-1}}^{l-1,i},\widetilde{\mu}_{t-1+(k-1)\Delta_{l-1},\theta'}^{l-1,N}) & = & \frac{1}{N_{l-1}}\sum_{j=1}^{N_{l-1}} \xi_{\theta'}(\widetilde{X}_{t-1+(k-1)\Delta_{l-1}}^{l-1,i},\widetilde{X}_{t-1+(k-1)\Delta_{l-1}}^{l-1,j})\\
\widetilde{\mu}_{t-1+(k-1)\Delta_{l-1},\theta'}^{l-1,N_{l-1}}(dx) & = & \frac{1}{N_{l-1}}\sum_{j=1}^{N_{l-1}}\delta_{\{\widetilde{X}_{t-1+(k-1)\Delta_l}^{l-1,j}\}}(dx)
\end{eqnarray*}
and the increments of the Brownian motion $\left[W_{t-1+k\Delta_{l-1}}^i - W_{t-1+(k-1)\Delta_{l-1}}^i\right]$ were generated in step 2..
Set $k=k+1$, if $k=\Delta_{l-1}^{-1}+1$ go to step 4.~otherwise go to the start of step 3..}
\item{Output all the required laws $\mu_{t-1+\Delta_l,\theta}^{l,N_l},\dots,\mu_{t,\theta}^{l,N_l}$,  $\widetilde{\mu}_{t-1+\Delta_l,\theta'}^{l-1,N_{l-1}},\dots,\widetilde{\mu}_{t,\theta'}^{l-1,N_{l-1}}$.}
\end{enumerate}
\caption{Approximating the Consecutive Laws when starting with a particle approximation at time $t-1$, $t\in\{1,\dots,T\}$.}
\label{alg:basic_method_coup}
\end{algorithm}

In order to specify the simulation of our coupling of  $(K_{l,\theta}(u_{1:T},\cdot),K_{l-1,\theta'}(u'_{1:T},\cdot)$, we need to further ingredients. The first is the ability to sample the maximum coupling of two positive probability mass functions on a finite set $\{1,\dots,M\}$. This is easily achieved and is detailed in several places 
and we refer the reader to e.g.~\cite{ub_grad_new,po_mv}.  The second is the simulation of a synchronous coupling of two first order Euler-Maruyama time discretizations of 
$$
 \overline{P}_{\mu_{t-1,\theta}^{l,N_l},t,\theta}^l(x_{t-1},du_t)
\quad \textrm{and} \quad
 \overline{P}_{\mu_{t-1,\theta}^{l-1,N_{l-1}},t,\theta'}^{l-1}(x_{t-1},du_t)
$$
where the input probability measures are simulated via Algorithm \ref{alg:basic_method_coup} independently
of all other random variables represented in the displayed equation. This has,  in effect, been described in 
Algorithm \ref{alg:basic_method_coup},  except of course we need only one particle ($N_l=N_{l-1}=1$ in Algorithm \ref{alg:basic_method_coup}) and the empirical measures used to approximate the law of the MV SDE need not be updated,  as they have been approximated and plugged in already.  Therefore we will write such a simulation of $(u_t^l,u_{t}^{l-1})\in\mathsf{E}_l\times\mathsf{E}_{l-1}$
conditonal upon $(u_{t-1}^l,u_{t-1}^{l-1})$ (where $(u_{0}^l,u_{0}^{l-1})=(x_0,x_0))$ as the kernel $\overline{P}^{l,l-1}_{\theta,\theta'}$.  The simulation of the coupling of $(K_{l,\theta}(u_{1:T},\cdot),K_{l-1,\theta'}(u'_{1:T},\cdot)$ is decribed in Algorithm \ref{alg:ccpf} and will be referred to as $\check{K}_{l,\theta,\theta'}(z_{1:T},\cdot)$.    
$z_{1:T}=(u_{1:T},u_{1:T}')\in\mathsf{E}_l^T\times\mathsf{E}_{l-1}^{T}$.  The method is simply a conditional version of the coupled particle filter in \cite{po_mv}.
As before, marginally,  the invariant measure in the level $l$ collection of variables is $\overline{\pi}_{\theta}^{l,N_l}$ and level $l-1$ is $\overline{\pi}_{\theta'}^{l-1,N_{l-1}}$.
The cost of simulation of $\overline{\mathbb{P}}_{\theta,\theta'}^{N_l,N_{l-1}}$ is $\mathcal{O}(N_l^2\Delta_{l}^{-1}T)$ and then of the subsequent parts of Algorithm \ref{alg:ccpf} as $\mathcal{O}(N_lM\Delta_{l}^{-1}T)$.

\begin{algorithm}[h]
\begin{enumerate}
\item{Input $l\in\mathbb{N}$ the level of discretization, $(N_l,N_{l-1})\in\mathbb{N}^2$, $N_l\geq N_{l-1}$, the number of particles for the approach of Algorithm \ref{alg:basic_method_coup},  $M$ the number of particles for the main method, $(\theta,\theta')\in\Theta^2$, 
and $(U_{1}^{',l},\dots,U_T^{',l})\in\mathsf{E}_l^T$,  $(U_{1}^{',l-1},\dots,U_T^{',l-1})\in\mathsf{E}_{l-1}^T$ }
\item{Sample $(\overline{u}_1,\dots,\overline{u}_T)$ 
and $(\widetilde{u}_1,\dots,\widetilde{u}_T)$
from $\overline{\mathbb{P}}_{\theta,\theta'}^{N_l,N_{l-1}}$ via a sequential
application of Algorithm \ref{alg:basic_method_coup}, with $N_l,N_{l-1}$ particles.}
\item{Set $k=1$, $A_{0}^{i,l}=A_0^{i,l-1}=i$ for $i\in\{1,\dots,M-1\}$.}
\item{Sampling: for $i\in\{1,\dots,M-1\}$ sample $(U_{k}^{i,l},U_k^{i,l-1})|(U_{k-1}^{A_{k-1}^{i,l},l},U_{k-1}^{A_{k-1}^{i,l-1},l-1})$ using the Markov kernel $\overline{P}^{l,l-1}_{\theta,\theta'}$. Set $U_{k}^{N,l}=U_k^{',l}$,  $U_{k}^{N,l-1}=U_k^{',l-1}$ and for $(i,s)\in\{1,\dots,M-1\}\times\{l-1,l\}$, $U_{0:k}^{i,s}=(U_0^{A_{k-1}^{i,s},s},\dots,U_{k-1}^{A_{k-1}^{i,s},s},U_{k}^{i,s})$. If $k=T$ go to 4..}
\item{Resampling: Construct the probability mass functions on $\{1,\dots,N\}$:
$$
r_1^i = \frac{G_{\theta}(x_k^{i,l},y_k)}{\sum_{j=1}^MG_{\theta}(x_k^{j,l},y_k)} \quad\textrm{and}\quad
r_2^i = \frac{G_{\theta'}(x_k^{i,l-1},y_k)}{\sum_{j=1}^MG_{\theta'}(x_k^{j,l-1},y_k)}.
$$
For $i\in\{1,\dots,M-1\}$ sample $A_k^{i,l}$ and $A_k^{i,l-1}$ using the maximal coupling of $r_1^i$ and $r_2^i$. Set $k=k+1$ and return to the start of 2..}
\item{Construct the probability mass functions on $\{1,\dots,M\}$:
$$
r_1^i = \frac{G_{\theta}(x_T^{i,l},y_T)}{\sum_{j=1}^MG_{\theta}(x_T^{j,l},y_T)} \quad\textrm{and}\quad
r_2^i = \frac{G_{\theta'}(x_T^{i,l-1},y_T)}{\sum_{j=1}^MG_{\theta'}(x_T^{j,l-1},y_T)}
$$
Sample $(i,j)\in\{1,\dots,M\}^2$ using these mass functions via the maximal coupling and return $(U_{1:T}^{i,l},,U_{1:T}^{i,l-1})$.}
\end{enumerate}
\caption{Coupled Conditional Particle Filter at level $l\in\mathbb{N}$.}
\label{alg:ccpf}
\end{algorithm}

\subsection{Stochastic Approximation}\label{sec:sa}

We will now detail the Markovian stochastic approximation (MSA) method that was adopted in \cite{ub_grad_new}
and adapted from \cite{andr} which can be used to compute the optimizer of $\log(p_{\theta}^{l,N}(y_1,\dots,y_T))$. We note that there are several user set parameters that are needed and a discussion is deferred to Sections \ref{sec:disc_meth} and  \ref{sec:theory}.  The calculation of $H_l^N$ is explained in details in Section \ref{sec:hl_comp}. We will use a sequence $\{\gamma_n\}_{n\in\mathbb{N}}$ of positive real numbers that is square summable but a divergent sum (i.e.~$\sum_{n=1}^{\infty}\gamma_n=\infty$).

The basic method is as follows,  which we shall refer to as Procedure 1 at level $l$.
\begin{enumerate}
\item{Set $\theta^{l,0}\in\Theta$ and approximate the laws of MV SDE using $\overline{\mathbb{P}}_{\theta^{l,0}}^{N_l}$ and
generate $U_{1:T}^{l,0}$ using
$$
\prod_{t=1}^T  \overline{P}_{\mu_{t-1,\theta^{l,0}}^{l,N_l},t,\theta^{l,0}}^l(x_{t-1},du_t)
$$
where $\mu_{0,\theta^{l,0}}^{l,N_l}(dx)=\delta_{\{x_0\}}(dx)$.
Set $n=1$.}
\item{Sample $U_{1:T}^{l,n}|(\theta_{l,0},u_{1:T}^{l,0}),\dots,(\theta^{l,n-1},u_{1:T}^{l,n-1})$ from $K_{l,\theta^{l,n-1}}(u_{1:T}^{l,n-1},\cdot)$.} 
\item{Update:
$$
\theta^{l,n} = \theta^{l,n-1} + \gamma_n H_l^{N_l}(\theta_{n-1}^l,s_{1:T}^{l,n})
$$
where in order to compute $H_l^{N_l}$ one has approximated the laws of the MV SDE using a new sample from  $\overline{\mathbb{P}}_{\theta^{l,n-1}}^{N_l}$.
Set $n=n+1$ and go to the start of 2..}
\end{enumerate}
We note that we do not specify a stopping iteration (time),  but in practice the algorithm will run for a finite number of iterations. The cost of one iteration of Procedure 1 is $\mathcal{O}(\{N_l^2\Delta_l^{-1} + N_lM\Delta_l^{-1}\}T)$ where $M$
is the number of samples used in Algorithm \ref{alg:cond_pf_0}. 

We will also need a method to generate couples of parameters,  which we now describe and
is referred to as Procedure 2 at level $l$.
\begin{enumerate}
\item{Set $(\theta^{l,0},\theta^{l-1,0})\in\Theta^2$, approximate the laws of MV SDEs using $\overline{\mathbb{P}}_{\theta^{l,0},\theta^{l-1,0}}^{N_l,N_{l-1}}$ and
generate $Z_{1:T}^{l,0}=(U_{1:T}^{l,0},U_{1:T}^{l-1,0})$ using a coupling of 
$$
\prod_{t=1}^T\overline{P}_{\mu_{t-1,\theta^{l,0}}^{l,N_l},t,\theta^{l,0}}^l(x_{t-1},du_t)
\quad
\textrm{and}
\quad
\prod_{t=1}^T \overline{P}_{\mu_{t-1,\theta^{l-1,0}}^{l-1,N_{l-1}},t,\theta^{l-1,0}}^l(x_{t-1},du_t)
$$
where we recall that $\mu_{0,\theta^{l,0}}^{l,N_l}(dx)=\mu_{0,\theta^{l-1,0}}^{l-1,N_{l-1}}(dx)=\delta_{\{x_0\}}(dx)$.
Set $n=1$.}
\item{Sample $Z_{1:T}^{l,n}|(\theta_{l,0},\theta^{l-1,0},z_{1:T}^{l,0}),\dots,(\theta^{l,n-1},\theta^{l-1,n-1},z_{1:T}^{l,n-1})$ from $\check{K}_{l,\theta^{l,n-1},\theta^{l-1,n-1}}(z_{1:T}^{l,n-1},\cdot)$.} 
\item{Update:
\begin{eqnarray*}
\theta^{l,n} & = & \theta^{l,n-1} + \gamma_n H_l^{N_l}(\theta^{l,n-1},s_{1:T}^{l,n}) \\
\theta^{l-1,n} & = & \theta^{l-1,n-1} + \gamma_n H_{l-1}^{N_{l-1}}(\theta^{l-1,n-1},s_{1:T}^{l-1,n})
\end{eqnarray*}
where in order to compute $H_l^{N_l},H_{l-1}^{N_{l-1}}$ one has approximated the laws of the MV SDE using a new sample from
$\overline{\mathbb{P}}_{\theta^{l,n-1},\theta^{l,n-1}}^{N_l,N_{l-1}}$.
Set $n=n+1$ and go to the start of 2..}
\end{enumerate}
As for Procedure 1,  we do not set a stopping iteration, but of course there will be one used in practice.
Similarly to Procedure 1, the cost of one iteration of Procedure 2 is $\mathcal{O}(\{N_l^2\Delta_l^{-1} + N_lM\Delta_l^{-1}\}T)$ where $M$
is the number of samples used in Algorithm \ref{alg:ccpf}. 

\subsubsection{Computation of $H_l^N$}\label{sec:hl_comp}

Recall the expression for $H_l$ in \eqref{eq:Hl_def},  which features computations of gradients associated to $a_{\theta}$ the drift of \eqref{eq:sde}.  As the drift depends on an expectation w.r.t.~the law of the MV SDE,  this naturally means that one could track a gradient associated to the the law of the MV SDE (or an approximation thereof).  This would further complicate what is already a collection of algorithms which are already quite advanced.
As a result, when needed at a level $l$,  we will replace such derivatives with a finite-difference approximation with step size exactly $\Delta_l$.  For instance if one needs to calculate for some co-ordinate $\theta^{(j)}$ of $\theta$,
$j\in\{1,\dots,d_{\theta}\}$
$$
\frac{\partial}{\partial \theta^{(j)}}\left\{\overline{\xi}_{\theta}(x,\mu_{t,\theta}^{l,N_l})\right\}
$$
then it is replaced with a finite difference approximation where the laws needed have been approximated using
$\overline{\mathbb{P}}_{\theta^{l}}^{N_l}$ or $\overline{\mathbb{P}}_{\theta^{l},\theta^{l}}^{N_l,N_{l-1}}$ and if needed rerun for the perturbation in the finite difference.  That is,  the approximation in this given case is
$$
\frac{1}{\Delta_l}\left(\overline{\xi}_{\theta}(x,\mu_{t,\theta}^{l,N_l})-\overline{\xi}_{\theta(j,l)}(x,\mu_{t,\theta(j,l)}^{l,N_l})\right)
$$
where $\theta(j,l)=(\theta^{(1)},\dots,\theta^{(j-1)},\theta^{(j)},\theta^{(j+1)}-\Delta_l,\dots,\theta^{(d_{\theta})})^{\top}$ and the laws have been simulated from $\overline{\mathbb{P}}_{\theta(l,j)}^{N_l}$ (either directly or marginally) to obtain
$\mu_{t,\theta(j,l)}^{l,N}$.  In reference to the simulation of the perturbed 
$\overline{\mathbb{P}}_{\theta(l,j)}^{N_l}$ if we are in the context of Procedure 1,  then it is a direct sample from
$\overline{\mathbb{P}}_{\theta(l,j)}^{N_l}$, otherwise we sample from $\overline{\mathbb{P}}_{\theta^{l}(j,l),\theta^{l-1}(j,l)}^{N_l,N_{l-1}}$. We write the resulting $H_l^N$ as $\widehat{H}_l^N$ from herein.
Whilst this is far from optimal, requiring $d_{\theta}-$extra simulations from $\overline{\mathbb{P}}_{\theta(l,j)}^{N_l}$, as remarked already,  this leads to minimal additional complications in Procedures 1 and 2.  We also note that generally $d_{\theta}$ is typically only a few dimensions at least in our applications.

\subsection{Final Algorithm}\label{sec:fa}

Our methodology for estimating the parameters of our partially observed MV SDE model is now described.
We use a randomized multilevel Monte Carlo method (see e.g.~\cite{rhee,matti}) which will require a positive probability mass function (PMF),  $\mathbb{P}_L$,  on the level of time-discretization which we constrain to the set $\{0,\dots,L\}$ and a positive PMF,  $\mathbb{P}_P$, on the number of iterations of the MSA approaches in Procedures 1 and 2.  
Additionally we need a sequence of increasing positive integers,  that go to infinity which we shall denote
as $\{T_p\}_{\in\mathbb{N}_0}$.
For now we shall assume that the afore-mentioned have been specified.  For the final algorithm, we note that typically one can run mutiple versions of this algorithm in parallel and average the estimators.
The final algorithm is now presented.

\begin{enumerate}
\item{Sample $l$ from $\mathbb{P}_L$ and $p$ from $\mathbb{P}_P$.}
\item If $l=0$ perform the following:
\begin{itemize}
\item{Run Procedure 1 at level 0 for $T_p$ iterations.}
\item{If $p=0$ return
$$
\widehat{\theta}_{\star} = \frac{\theta^{l,T_p}}{\mathbb{P}_P(p)\mathbb{P}_L(l)},
$$
otherwise  return
$$
\widehat{\theta}_{\star} = \frac{\theta^{l,T_p}-\theta^{l,T_{p-1}}}{\mathbb{P}_P(p)\mathbb{P}_L(l)}.
$$
}
\end{itemize}
\item{If $l\in\{1,\dots,L\}$, run Procedure 2 at level $l$ for $T_p$ iterations.}
\begin{itemize}
\item If $p=0$ return
$$
\widehat{\theta}_{\star} = \frac{\theta^{l,T_p}-\theta^{l-1,T_p}}{\mathbb{P}_P(p)\mathbb{P}_L(l)},
$$
otherwise  return
$$
\widehat{\theta}_{\star} = \frac{\theta^{l,T_p}-\theta^{l-1,T_{p}}-\{\theta^{l,T_{p-1}}-\theta^{l-1,T_{p-1}}\}}{\mathbb{P}_P(p)\mathbb{P}_L(l)}.
$$
\end{itemize}
\end{enumerate}

\subsection{Discussion of Methodology}\label{sec:disc_meth}

The approach that we developed in the previous section is a substantial extension of the approach in \cite{ub_grad_new} (see also \cite{ub_pf} for a related method).  We now discuss several aspects of the methodology for clarifications and also to compare with several ideas that exist in the literature.

One of the first points of interest is of course what it is that we will hope to compute via our estimator
$\widehat{\theta}_{\star}$ via \eqref{eq:mod_grad_ll}.  
Let $\check{\mathbb{E}}_{\theta}^{N_L}$ denote the expectation associated to 
$\overline{\mathbb{P}}_{\theta}^{N_L}$ and the
extra-randomness needed to compute $\widehat{H}_L^{N_L}$ (see Section \ref{sec:hl_comp}) and $\theta_{\star}^L$ the unique value
which makes
$$
\widehat{\nabla_{\theta}\log(p_{\theta}^{L,N_L}(y_1,\dots,y_T))} := 
\int_{\mathsf{E}_L^T}
\check{\mathbb{E}}_{\theta}^{N_L}\left[\widehat{H}_L^{N_L}(\theta,s_1,\dots,s_T)\right]
\overline{\pi}^{L,N_L}_{\theta}\left(d(u_1,\dots,u_T)\right)
$$
zero,  which is assumed exist,  then in the next section we will show that
$\mathbb{E}[\widehat{\theta}_{\star}]=\theta_{\star}^L$ where we use
$\mathbb{E}[\cdot]$ to denote the expectation w.r.t.~the probability law associated to the final algorithm.
Using the theory of \cite{basic_method}, Euler-Maruyama and finite difference methods,  one expects that under assumptions
$\theta_{\star}^L\rightarrow\theta^{\star}$ the (assumed) unique maximizer of the continuous-time likelihood.
So although one will have a bias,  there is an expectation that this should be small especially if $L$ is sufficiently large; for this particular model,  the subject of the rate of decrease of the bias as $L$ grows is to the best of our knowledge unstudied in the literature.

There are several user-set parameters in the algorithm and we try to give some guidance on the selection of these here and in Section \ref{sec:theory}.  First is (non-decreasing) the sequence $\{N_l\}_{l\in\{0,\dots,L\}}$ which are used in the approximation of the laws in $H_l$ and also in the two MCMC methods of Algorithms \ref{alg:cond_pf_0} (called $N$ there) and \ref{alg:ccpf}.  The approximation of the laws is critical to accuracy of the methodology and as such we have increased the computational effort with the level of time-discretization. 
This is of course not the only possibility and one option would be to decouple the time-discretization and the accuracy of the laws,  and then,  instead of focussing on a multilevel method,  consider a multi-index type approach \cite{haji},  which in-fact has been considered in a much different context in \cite{nadir}.  This possibility is not considered in this article but is a viable alternative that could reduce the cost of the approach that we use.  In terms of the actual selection of the sequence,  this will be considered in the next section. 
Second the parameter $M$ that is in Algorithms \ref{alg:cond_pf_0} and \ref{alg:ccpf} is well-studied in the literature,  see for instance \cite{andrieu,andr} and we use the convention that $M=\mathcal{O}(T)$. The choices
of  $\mathbb{P}_L$,  $\mathbb{P}_P$ and $\{T_p\}_{\in\mathbb{N}_0}$ are discussed below.

\subsection{Theory}\label{sec:theory}

We now start by giving the main mathematical result of the article.   It should first be noted that our proof applies to a version of our estimator that is subject to reprojection which has been explained in several articles; see \cite{andr3,andr,ub_grad_new}.  As this approach is never implemented in our numerical results,  we do not make further mention of it here and refer the reader to those articles and the references therein.  We also assume that the matrix $\sigma(x)$ in \eqref{eq:sde} is diagonal,  which simplifies many calculations.  In addition to this several
assumptions are made in Appendix \ref{app:ass} which can be seen there with a discussion.  We can now state our main result.  The proof of the following result is contained within the entirety of Appendix \ref{app:theory} but with the final argument given in Appendix \ref{app:main_theo}.

\begin{theorem}\label{theo:main_thm}
Assume (A\ref{ass:1_new}-\ref{ass:4}).  Then we have that $\mathbb{E}[\widehat{\theta}_{\star}]=\theta_{\star}^L$ .
\end{theorem}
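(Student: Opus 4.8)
The plan is to recognise $\widehat{\theta}_{\star}$ as a doubly-randomised single-term (Rhee--Glynn type) unbiased estimator and to verify the identity by an explicit telescoping computation, taking as input the Markovian stochastic approximation (MSA) mean-convergence established in the preceding parts of the appendix. First I would write the expectation of $\widehat{\theta}_{\star}$ as a mixture over the independently drawn pair $(l,p)\sim\mathbb{P}_L\otimes\mathbb{P}_P$. Because every returned quantity is divided by $\mathbb{P}_P(p)\mathbb{P}_L(l)$, the importance weights exactly cancel the sampling probabilities, and one obtains
$$
\mathbb{E}[\widehat{\theta}_{\star}]=\sum_{l=0}^{L}\sum_{p=0}^{\infty}\mathbb{E}[D_l^{p}],
$$
where $D_l^{p}$ denotes the unweighted increment returned at $(l,p)$: for $l=0$ it equals $\theta^{0,T_0}$ when $p=0$ and $\theta^{0,T_p}-\theta^{0,T_{p-1}}$ when $p\geq1$; for $l\geq1$ it equals $\theta^{l,T_0}-\theta^{l-1,T_0}$ when $p=0$ and $(\theta^{l,T_p}-\theta^{l-1,T_p})-(\theta^{l,T_{p-1}}-\theta^{l-1,T_{p-1}})$ when $p\geq1$. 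Here $\mathbb{E}[D_l^p]$ is taken over the corresponding run of Procedure 1 (for $l=0$) or Procedure 2 (for $l\geq1$).

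Next, for each fixed $l$ I would telescope the inner sum over $p$. Using the earlier result that the MSA iterates converge in mean, $\lim_{n\to\infty}\mathbb{E}[\theta^{l,n}]=\theta_{\star}^{l}$ (and likewise at level $l-1$), the partial sums collapse to
$$
\sum_{p\geq0}\mathbb{E}[D_0^{p}]=\lim_{p\to\infty}\mathbb{E}[\theta^{0,T_p}]=\theta_{\star}^{0},\qquad \sum_{p\geq0}\mathbb{E}[D_l^{p}]=\lim_{p\to\infty}\bigl(\mathbb{E}[\theta^{l,T_p}]-\mathbb{E}[\theta^{l-1,T_p}]\bigr)=\theta_{\star}^{l}-\theta_{\star}^{l-1}\quad(l\geq1).
$$
The structural point that must be checked here is the marginal consistency of the coupling: the level-$l$ component of Procedure 2 at level $l$ and the level-$l$ component of Procedure 2 at level $l+1$ share identical marginal dynamics --- the same invariant marginal $\overline{\pi}_{\theta}^{l,N_l}$ of the conditional particle filter, the same law-approximation $\overline{\mathbb{P}}_{\theta}^{N_l}$, the same finite-difference step $\Delta_l$ inside $\widehat{H}_l^{N_l}$, and the same step sizes $\{\gamma_n\}$ --- so that both occurrences of a level-$l$ marginal (the one appearing in the $l$-term and the one appearing in the $(l{+}1)$-term) converge to the \emph{same} limit $\theta_{\star}^{l}$. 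This marginal matching, which follows from the maximal- and synchronous-coupling construction of $\check{K}_{l,\theta,\theta'}$ and $\overline{\mathbb{P}}^{N_l,N_{l-1}}_{\theta,\theta'}$, is precisely what makes the outer telescoping legitimate at the level of expectations. Summing the collapsed inner sums over $l$ then gives
$$
\mathbb{E}[\widehat{\theta}_{\star}]=\theta_{\star}^{0}+\sum_{l=1}^{L}\bigl(\theta_{\star}^{l}-\theta_{\star}^{l-1}\bigr)=\theta_{\star}^{L},
$$
which is the assertion.

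Two ingredients need care. For the interchange of the double series with the expectation in the first display, and for passing to the limit in the telescoping, I would invoke the reprojection device: it confines every iterate $\theta^{l,n}$ to a fixed compact set, rendering all $\mathbb{E}[D_l^{p}]$ finite and uniformly bounded, and then appeal to a quantitative convergence rate from the earlier lemmas --- a bound on $\mathbb{E}\|\theta^{l,T_p}-\theta_{\star}^{l}\|$ decaying in $p$ --- to secure absolute summability $\sum_{l,p}\mathbb{E}\|D_l^{p}\|<\infty$, so that Fubini applies and the telescoped partial sums converge to their stated limits. The genuinely hard part, which lives in the bulk of the appendix rather than in this final argument, is the MSA mean-convergence itself: proving that the stochastic approximation recursion driven by the (coupled) conditional particle filter kernels converges in mean to the root $\theta_{\star}^{l}$ of the modified gradient $\widehat{\nabla_{\theta}\log(p_{\theta}^{l,N_l})}$. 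That rests on the uniform ergodicity of these kernels, the Lipschitz and boundedness content of (A1)--(A4), and the step-size conditions $\sum_n\gamma_n=\infty$, $\sum_n\gamma_n^2<\infty$; granting those, the computation above closes the proof.
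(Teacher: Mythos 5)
Your proposal is correct and follows essentially the same route as the paper: the paper's own proof of Theorem \ref{theo:main_thm} consists of verifying the assumptions (A1--9) of \cite[Theorem 4.1]{ub_grad_new} via the appendix lemmas (the drift, minorization and $\theta$-continuity bounds, all uniform in the input probability measure) and then invoking that theorem's proof, which is precisely the doubly-randomised single-term telescoping argument---with reprojection, marginal consistency of the coupled kernels, and MSA mean-convergence---that you write out explicitly. The only small correction is one of attribution: the appendix does not itself establish the mean-convergence $\lim_{n\to\infty}\mathbb{E}[\theta^{l,n}]=\theta_{\star}^{l}$, but only supplies the uniform-in-measure ergodicity and continuity ingredients, with the convergence theorem and the summability/Fubini bookkeeping you describe imported wholesale from \cite{ub_grad_new}.
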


The result is essentially the minimal one that one could hope for.  It says nothing about whether the estimator is of finite variance or how one can select some of the remaining simulation parameters.  To this end, based upon the results in \cite{po_mv,frika,ub_pf} we \emph{conjecture} that, under appropriate assumptions, the variance of our estimator is upper-bounded by an expression that is
\begin{equation}\label{eq:var_conj}
\mathcal{O}\left(
\frac{1}{\mathbb{P}_L(0)}\sum_{p=0}^{\infty}\frac{1}{\mathbb{P}_P(p)T_p} + 
\sum_{l=1}^L\sum_{p=0}^{\infty}\frac{1}{\mathbb{P}_L(l)\mathbb{P}_P(p)}
\left\{
\frac{\Delta_l^{1/2}}{T_p} +
\frac{L\Delta_l}{N_l}
\right\}
\right).
\end{equation}
Just as in \cite{ub_pf} it is difficult to find a combination of simulation parameters which ensure that the expected cost and \eqref{eq:var_conj} is simultaneously finite.  However, one can choose (for instance as in \cite{ub_pf}) $T_p=2^p$,
$N_l=lL$, $\mathbb{P}_L(l)\propto \Delta_l^{1/2\varepsilon}$,  $\varepsilon\in(0,1)$,  and $\mathbb{P}_P(p)\propto 2^p(p+1)\log_2(p+2)$ and the expression in \eqref{eq:var_conj} is finite.  As noted in \cite{ub_pf} this also implies that the cost is finite with high-probability.   Note that the presence of $L$ in \eqref{eq:var_conj} (which could possibly removed \cite{po_mv}) is one reason why we truncate the level of time discretization in our estimator; hence we have a randomized multilevel method and not an unbiased method.  We also remark that to verify \eqref{eq:var_conj} is likely to be particularly arduous,  see for instance the proofs in \cite{ub_grad} in a simpler context and so we leave this to future work.

\section{Numerical Simulations}\label{sec:numerics}

In this section we validate our methodology and algorithms by applying them to two distinct models: one from computational neuroscience and another from the domain of deep learning. These models serve as rigorous test cases, allowing us to assess the robustness and efficacy of our approach across diverse application areas.

\subsection{Computational Neuroscience: Continuous-Time Stochastic Kuramoto }

\subsubsection{Model}

The continuous-time stochastic Kuramoto model extends the classical Kuramoto model used in computational neuroscience, combining both deterministic and stochastic components to better capture phase synchronization dynamics within networks of oscillators \cite{Kuramoto}. This model is crucial for analyzing collective behaviors in neural networks and other oscillatory systems where interactions are governed by mean-field theory. By integrating mean-field interactions with stochastic noise, the model provides a robust framework for studying synchronization phenomena. It effectively addresses the complexities of phase alignment in coupled oscillators, making it a valuable tool for understanding emergent properties such as synchronization and desynchronization in neural systems and various other oscillatory networks.

For any time \( t \in [0,T] \), the model is described by the following dynamics of the one-dimensional McKean-Vlasov stochastic differential equation:
\begin{equation}
    dX_t = \left(\theta + \int \sin(X_t - x) \, \mu_t(dx)\right) dt + \sigma \, dW_t, \quad X_0 = x_0 \in \mathbb{R},
    \label{Kur}
\end{equation}
where $\{X_t\}_{t\in[0,T]}$ represents the phase of the oscillator at time \( t \) for a neuron in the network, and \( \theta \) is the intrinsic frequency of the oscillator. The term  $\int \sin(X_t - x) \, d\mu_t(dx)$ captures the mean-field interaction, with $\mu_t(y)$ denoting the law of the phases of all other oscillators in the network at time \( t \). Finally, the diffusion coefficient  $\sigma > 0$  indicates the intensity of the stochastic noise term, which accounts for random fluctuations or external disturbances affecting the oscillator's phase, while $\{W_t\}_{t\in[0,T]}$ denotes the one-dimensional Wiener process that introduces randomness into the phase dynamics.

We numerically simulate the dynamics of our first model (\ref{Kur}) with parameters including a frequency (\(\theta = 0.50\)), noise intensity (\(\sigma = 0.15\)), and 100 oscillators over a time span of 100 units. The simulation captures the phase dynamics of these oscillators, accounting for both mean-field interactions and stochastic noise.
We have the phase evolution (Figure \ref{fig: SimKuramoto}(a)) which displays how the phases of all oscillators evolve over time, showing trends toward synchronization despite noise.
The single oscillator phase (Figure \ref{fig: SimKuramoto}(b)) illustrates the trajectory of one oscillator's phase, highlighting its interaction with the group.

\begin{figure}[H]
\centering
\subfloat[]{\includegraphics[width=0.45\textwidth]{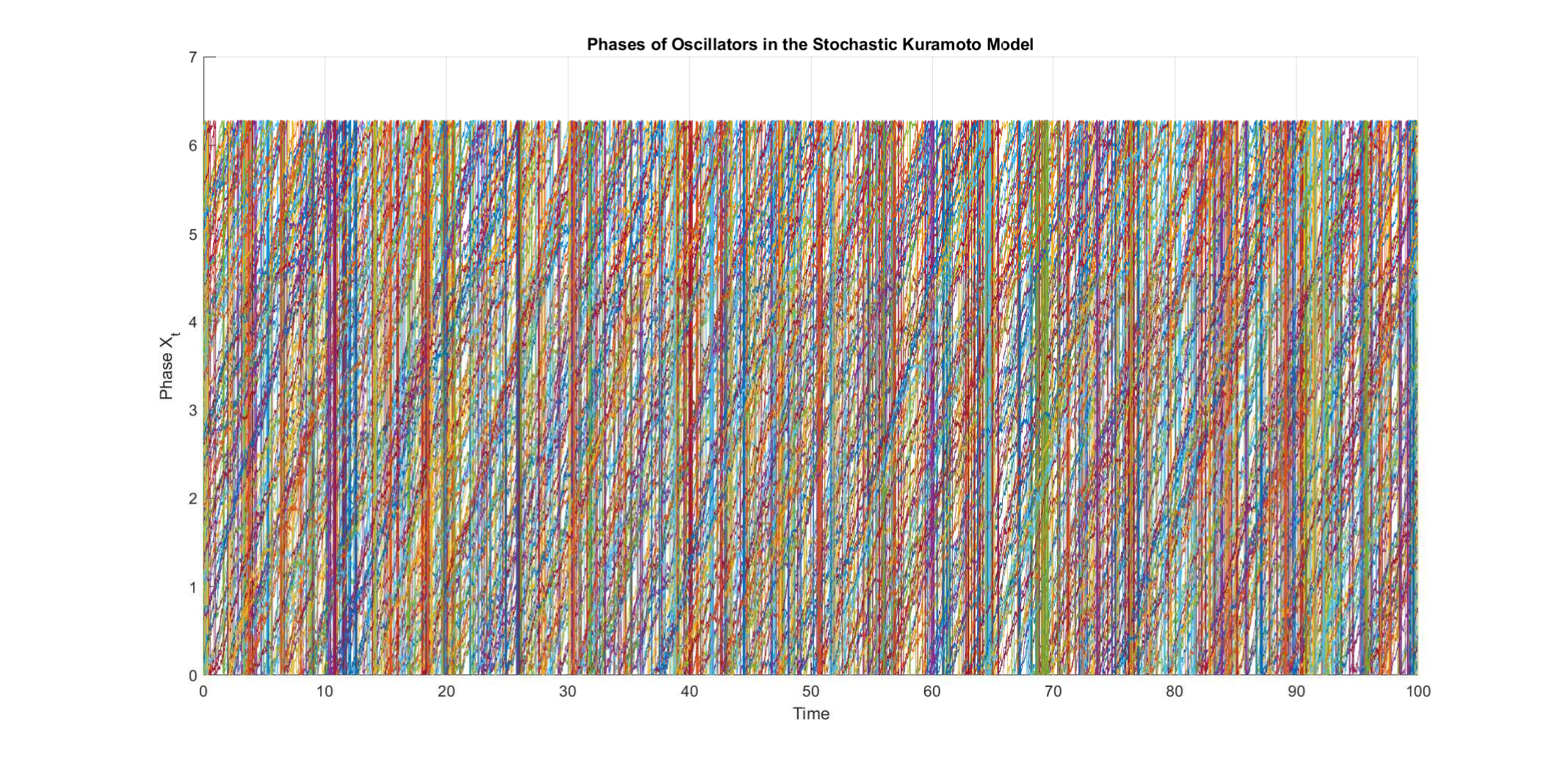}}\qquad
\subfloat[]{\includegraphics[width=0.45\textwidth]{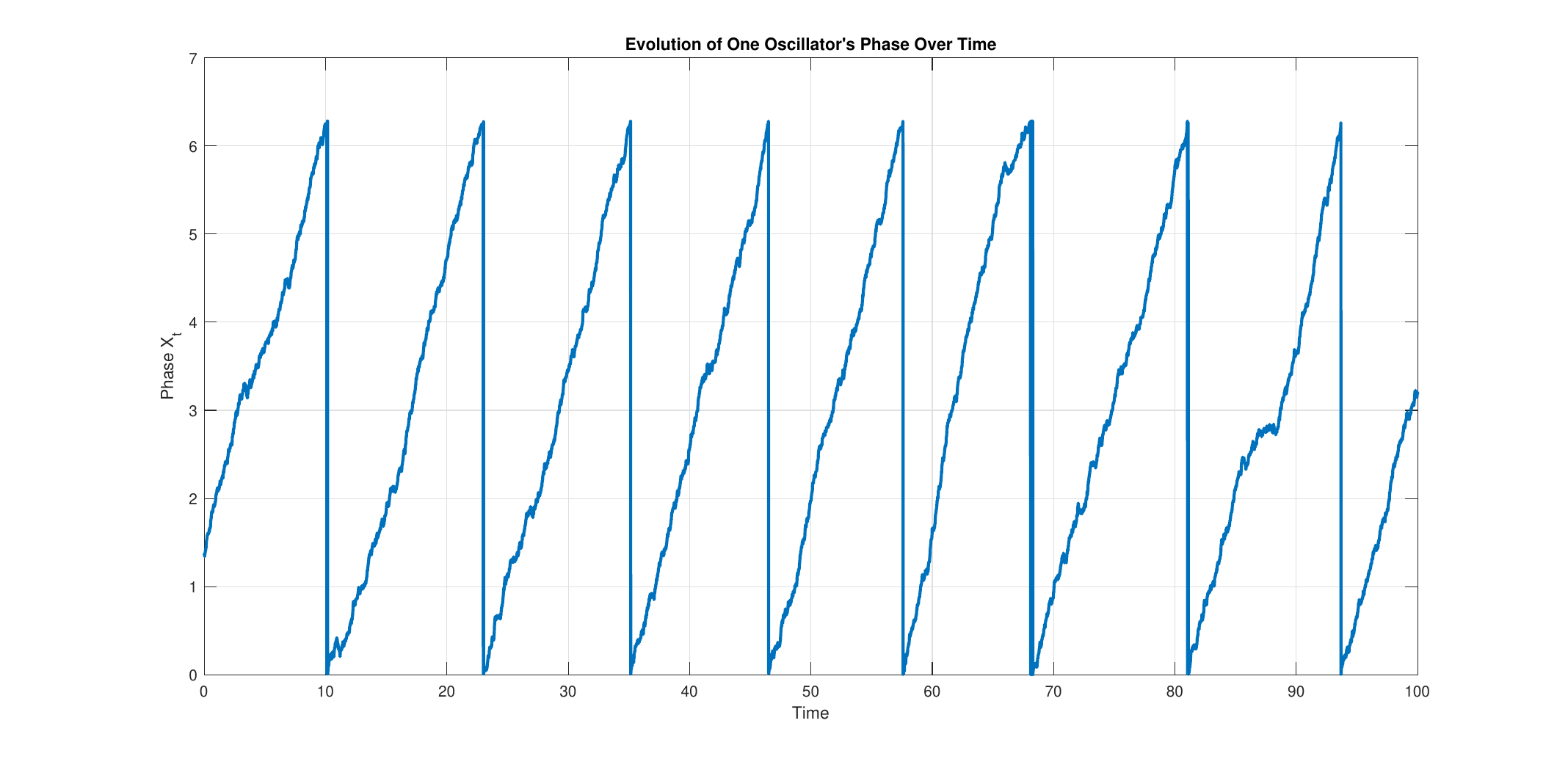}}\qquad
\caption{Outputs of the stochastic Kuramoto model. (a) Phases of the oscillators over time. (b) Evolution of one oscillator's phase over time.}
\label{fig: SimKuramoto}
\end{figure}

\subsection{Simple Training Neural Networks with Mean-Field Interactions}

Training neural networks with McKean-Vlasov stochastic differential equations use mean-field interactions to model how each neuron's activation depends on both its state and the collective behavior of the network. This approach improves the scalability and effectiveness of deep learning models by accounting for network-wide influences and handling noise and uncertainty better. Ongoing research is expected to yield new techniques for training neural networks and broaden their applications in artificial intelligence \cite{DeepL1}. Below, we present an example of a feedforward neural network architecture.
\begin{center}

\begin{tikzpicture}[node distance=2cm, font=\footnotesize]

    \node[input] (i1) {Input 1};
    \node[input, below of=i1] (i2) {Input 2};
    \node[input, below of=i2] (i3) {Input 3};
    
    \node[hidden, right of=i1, xshift=3cm] (h1) {Hidden 1};
    \node[hidden, below of=h1] (h2) {Hidden 2};
    \node[hidden, below of=h2] (h3) {Hidden 3};
    
    \node[output, right of=h1, xshift=3cm] (o1) {Output 1};
    \node[output, below of=o1] (o2) {Output 2};
    
    \foreach \i in {1,2,3}
        \foreach \h in {1,2,3}
            \draw[arrow] (i\i) -- (h\h);
    \foreach \h in {1,2,3}
        \foreach \o in {1,2}
            \draw[arrow] (h\h) -- (o\o);

\end{tikzpicture}

\end{center}

\subsubsection{Model}

The evolution of the activation $X_t$ of a neuron in a neural network can be described by the following 
$d-$dimensional system of
McKean-Vlasov SDEs, for $j\in\{1,\dots,d\}$
\begin{equation}
dX_t^{(j)} = \left\{\alpha\left(\frac{1}{d}\sum_{j=1}^d \int f(x)\mu_t^{(j)}(dx) - f(X_t^{(j)})\right)
+ \beta \left( \overline{w} X_t^{(j)} - b\right)
\right\}dt + \sigma dW_t^{(j)}
\label{NN}
\end{equation}
where $\mu_t^{(j)}$ is the marginal of the law of $X_t$ in co-ordinate $j$,  $f:\mathbb{R}\rightarrow\mathbb{R}$
is an activation function (explained below) and $\{W_t\}_{t\in[0,T]}$ is $d-$dimensional standard Brownian motion. 
$(\alpha,\beta,b,\overline{w},\sigma)\in\mathbb{R}^4\times\mathbb{R}^+$ are parameters which are now discussed.

In the model \eqref{NN},  $X_t$ denotes the activation of a representative neuron at time $t$, capturing the network's overall dynamics through a mean-field interaction with other neurons. The activation function $f(\cdot)$, such as ReLU, sigmoid, or hyperbolic tangent, introduces non-linearity, which is crucial for capturing complex patterns within the data and enhancing learning dynamics. The term 
$\frac{1}{d}\sum_{j=1}^d \int f(x) \mu_t^{(j)}(dx)$ represents the mean activation across neurons, aligning each neuron's behavior with the collective network state.  The parameters $\alpha$ and $\beta$ respectively control the influence of the mean-field interaction and the rate of change in $X_t$ due to the neuron's weighted input $\overline{w} X_t^{(j)}$ and bias $b$. The noise term, scaled by $\sigma$, introduces stochasticity,  where $W_t^{(j)}$ represents a Wiener process adding random fluctuations to the neuron's activation dynamics.

We numerically simulate the dynamics of this model (\ref{NN}) using the following parameter values: coupling strength $\alpha = 0.5$, rate of change in activation $\beta = 0.35$, noise intensity $\sigma = 0.15$, with the activation function $f(\cdot)$ set as a sigmoid function. For this simulation, we consider a total of $d = 100$ neurons over a time span of $T = 100$ seconds with a time step of $0.01$ seconds. These values provide a framework for analyzing neuron activation dynamics within the network, as illustrated in Figure \ref{fig: SimDeepL}.

\begin{figure}[H]
\centering
\subfloat[]{\includegraphics[width=0.45\textwidth]{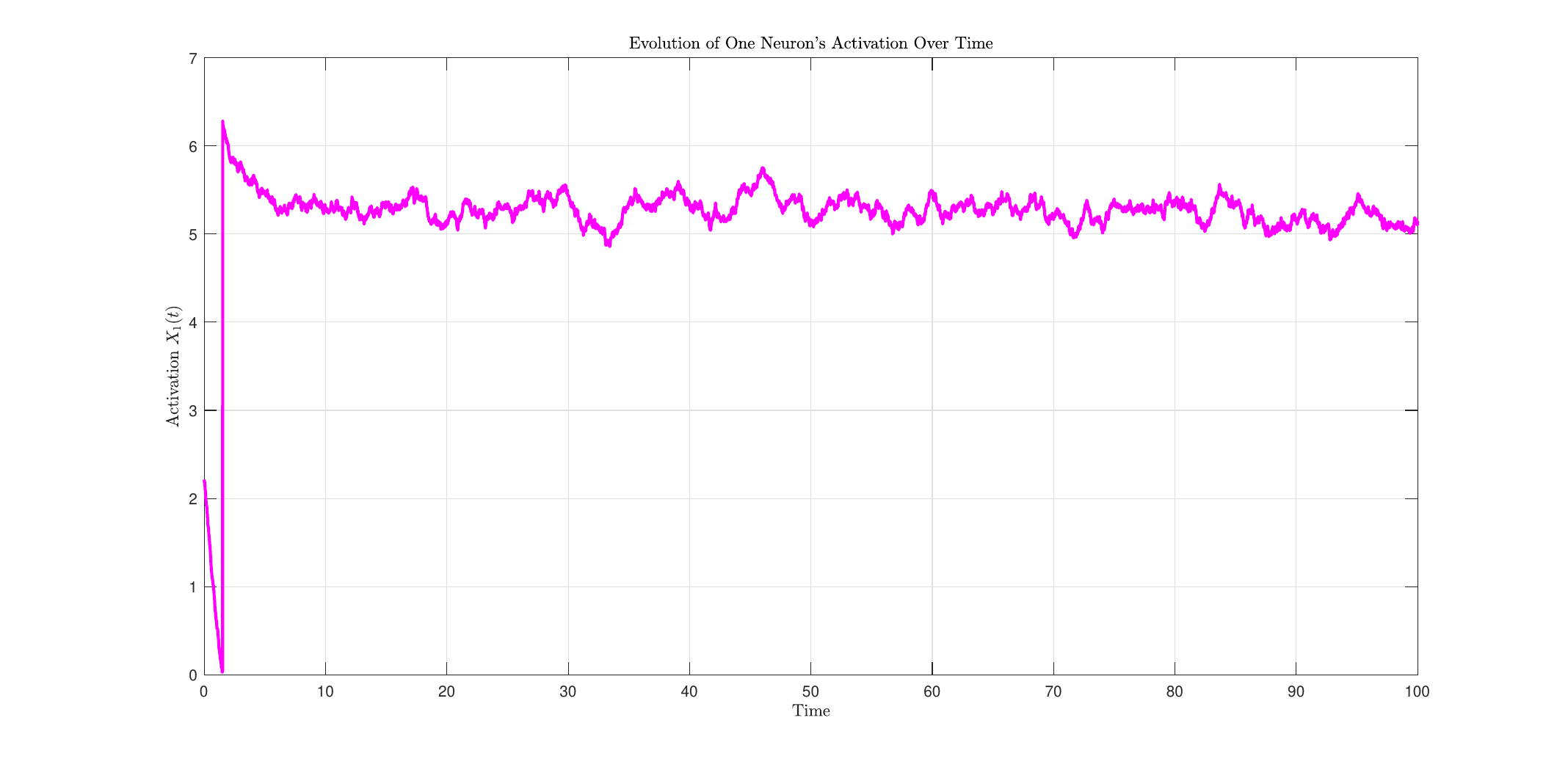}}\qquad
\subfloat[]{\includegraphics[width=0.45\textwidth]{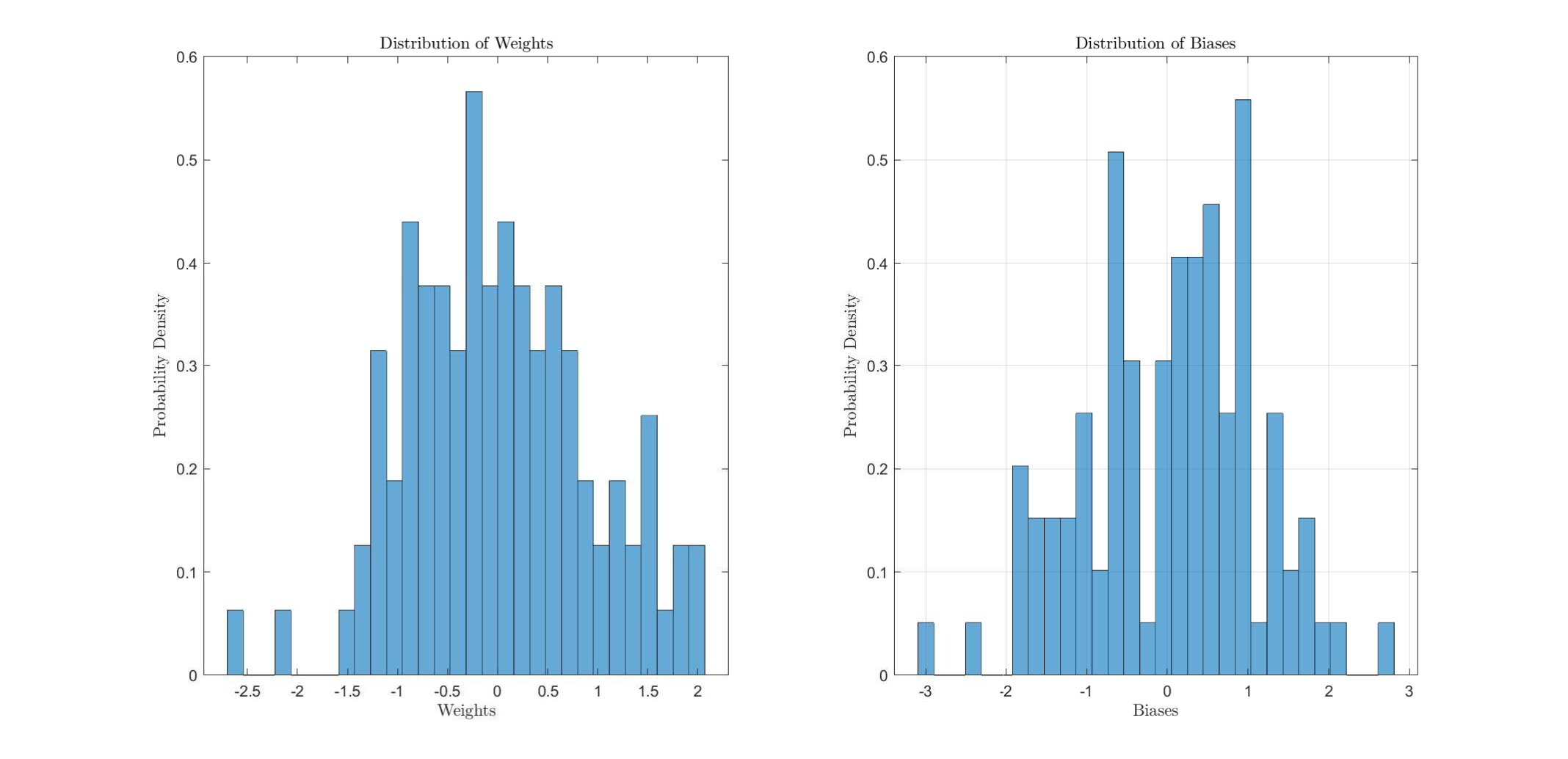}}
\caption{Outputs of the training neural networks with McKean-Vlasov stochastic differential equations. (a) Evolution of one neuron's activation over time $X_1$. (b) Histograms of weights and biases.}
\label{fig: SimDeepL}
\end{figure}

\subsection{Numerical Settings and Results}
In our simulations of both McKean-Vlasov dynamics models, we analyze observations denoted as  $\{ Y_k \}_{k=1}^{T}$, which are recorded at unit time intervals. These observations are modeled according to the conditional distribution $ Y_k | X_k = x_k \sim \mathcal{N}(x_k, \tau^2)$  (one-dimensional normal distribution of mean $x_k$ and variance $\tau^2$) for each $ k \in \{1, \ldots, T\}$,  where \( X_k \) represents the underlying state variable at time \( k \) and the noise parameter \( \tau \) quantifies the variance of the observations. In the context of the Kuramoto model, which describes coupled oscillators, our primary objective is to estimate the parameter of interest: the frequency of the oscillators denoted as \( \theta \). The model is initialized with the starting phase of the oscillator set at \( x_0 = 1.5 \), and the simulations are run for a total time period of \( T = 100 \) units. The parameters governing the dynamics include \( \sigma \), representing the intensity of the inherent stochasticity in the model, which is set to \( 0.15 \), and \( \tau \), fixed at \( 1 \), representing the noise level in the observations. These parameters collectively influence the behavior of the oscillator system and the accuracy of the estimated frequency \( \theta \). For the Neural Network McKean-Vlasov model, our aim shifts toward estimating two critical parameters: the coupling strength \( \alpha \) and the rate of change in activation \( \beta \). This model leverages the principles of neural networks to capture the dynamics of the system through its coupling and activation mechanisms. Notably, the values of \( \sigma \) and \( T \) remain identical to those used in the Kuramoto model, reinforcing the comparability between the two models and ensuring that any differences in estimated parameters arise from the distinct underlying dynamics rather than variations in the simulation settings. 

We establish a discretization scheme by setting the minimum discretization level \( l_{\text{min}} = 2 \). For each level \( l \geq 2 \), we define a step size \( \Delta_l = 2^{-l} \), which allows finer temporal resolutions as \( l \) increases.
To obtain robust statistical estimates, we run our algorithm in parallel for $\overline{M}$ multiple repetitions (different from $M$ used e.g.~in Algorithm \ref{alg:cond_pf_0}).
In these simulations, the parameters of interest we aim to estimate are denoted by  $\widehat{\theta}_{\star}^1, \dots, \widehat{\theta}^{\overline{M}}_{\star}$ , where each  $\widehat{\theta}^j_{\star}$ represents an individual estimate from a single run of the final algorithm.  By averaging these estimates, we define an aggregated estimator $ \overline{\theta}^{\overline{M}}_{\star} = \frac{1}{\overline{M}} \sum_{j=1}^{\overline{M}} 
\widehat{\theta}^j_{\star} $. This averaged estimator $\overline{\theta}^{\overline{M}}_{\star}$ provides a more stable and accurate representation of the underlying parameter, reducing the variance associated with any single run and leveraging the law of large numbers.
To evaluate the accuracy of our aggregated estimator  $\overline{\theta}^{\overline{M}}_{\star}$, we compute the mean squared error (MSE), which measures the deviation of $\overline{\theta}^{\overline{M}}_{\star}$ from the true parameter value. The MSE is calculated using the formula:
$$
\text{MSE} = \frac{1}{100} \sum_{i=1}^{100} \left|\overline{\theta}^{\overline{M},i}_{\star} - \theta_{\star}^L \right|^2,
$$
where  $\theta_{\star}^L$  is a reference value approximating the true maximum likelihood estimate (MLE) obtained through gradient descent. Since the exact likelihood is intractable for these models, we approximate $ \theta_{\star}^L $ by running our algorithm  $2^{15}$  times.
This MLE reference provides a benchmark for evaluating our algorithm's performance. The MSE, therefore, gives insight into the estimator's consistency and convergence by indicating how closely  
$\overline{\theta}^{\overline{M}}_{\star}$  approximates  $\theta_{\star}^L$  over multiple trials.

The number of particles $M$ used in the Conditional Particle Filter (CPF) and Conditional Coupled Particle Filter (CCPF) algorithms scales as $\mathcal{O}(T)$, where  $T$  represents the final time. To optimize the algorithms, we set parameters as follows:  $T_p = 2^p $ and $ N_l = lL$, where  $l$  is the discretization level.
 The probability distributions for level selection are given by  $\mathbb{P}_L(l) = 2^{-l/2\varepsilon} \; \mathbb{I}_{ \{ l_{\mathrm{min}}, .\dots,L\} } (l)$ for some $l_{\mathrm{min}}, L \in \mathbb{N} $, with  $\varepsilon \in (0,1)$  and $\mathbb{I}$ represents the indicator function. The distribution for the parameter $ p $ is $ \mathbb{P}_P(p) \propto 2^p(p+1)\log_2(p+2)$ , balancing growth with a logarithmic term. These carefully chosen parameters and distributions aim to enhance the performance of our algorithms for accurate parameter estimation in dynamic systems.

In Figures (\ref{fig:Results1}) and (\ref{fig:Results2}), we present the relationship between the Mean Squared Error (MSE) and the computational cost of the algorithm. Here, the computational cost is defined as the cumulative sum of the costs of all $\overline{M}$  parallel processes, where each process contributes to the overall effort required by the algorithm. This cumulative cost reflects the total resources consumed when running the algorithm in parallel.
They reveal that as the number of parallel processes  $\overline{M}$ increases, the MSE decreases, following the scaling behavior $\text{MSE} = \mathcal{O} \big( \frac{1}{\overline{M}} \big)$. This indicates that the algorithm achieves greater accuracy with additional parallel processes, effectively reducing the MSE in inverse proportion to  $\overline{M}$. This scaling behavior underscores the efficiency of parallelization in improving accuracy.

\begin{figure}[H]
\centering
\subfloat[]{\includegraphics[width=0.45\textwidth]{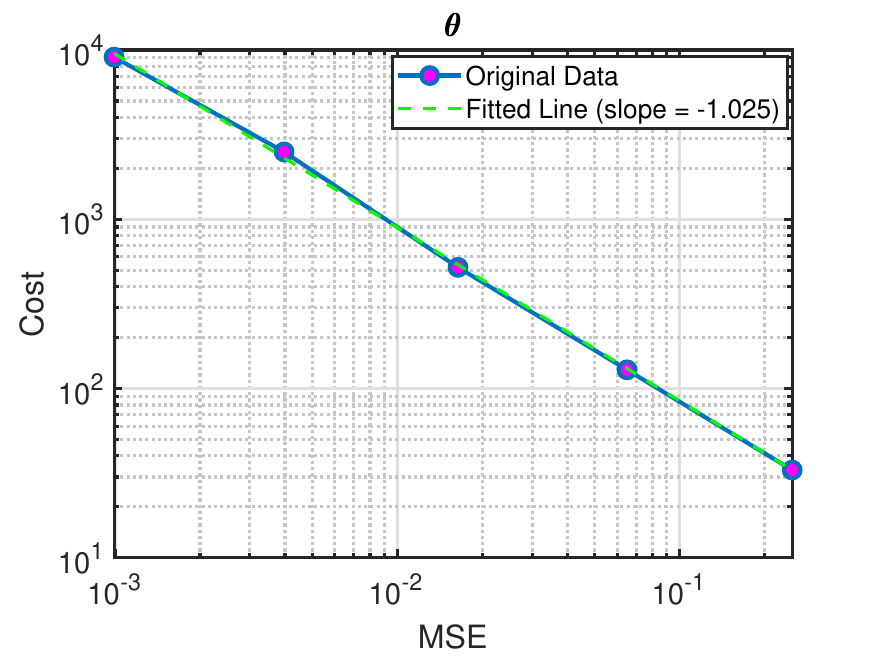}}\qquad
\subfloat[]{\includegraphics[width=0.45\textwidth]{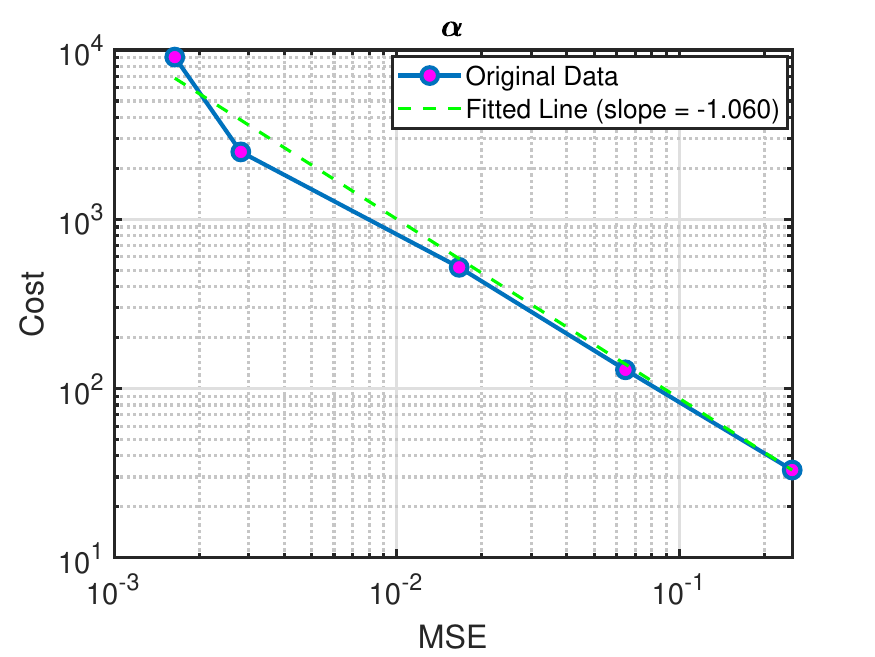}}\qquad
\subfloat[]{\includegraphics[width=0.45\textwidth]{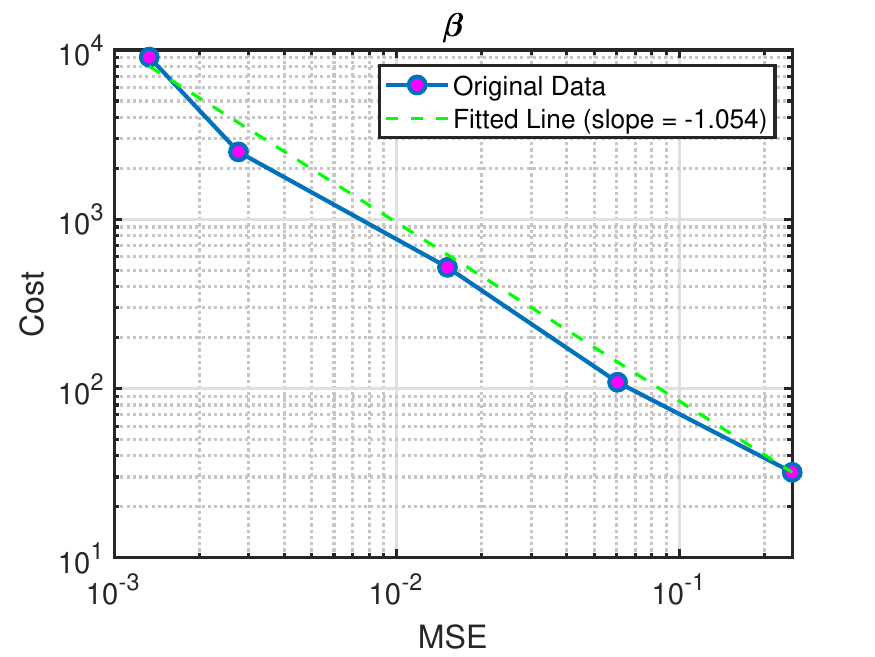}}
\caption{Log-log scale of MSE versus Cost for parameter estimation in the Kuramoto and neural network models, evaluated with $L=4$. (a) Estimation of parameter $\theta$, which governs synchronization dynamics in the Kuramoto model; (b) Estimation of neural network parameter $\alpha$; (c) Estimation of neural network parameter $\beta$.}
\label{fig:Results1}
\end{figure}

\begin{figure}[H]
\centering
\subfloat[]{\includegraphics[width=0.45\textwidth]{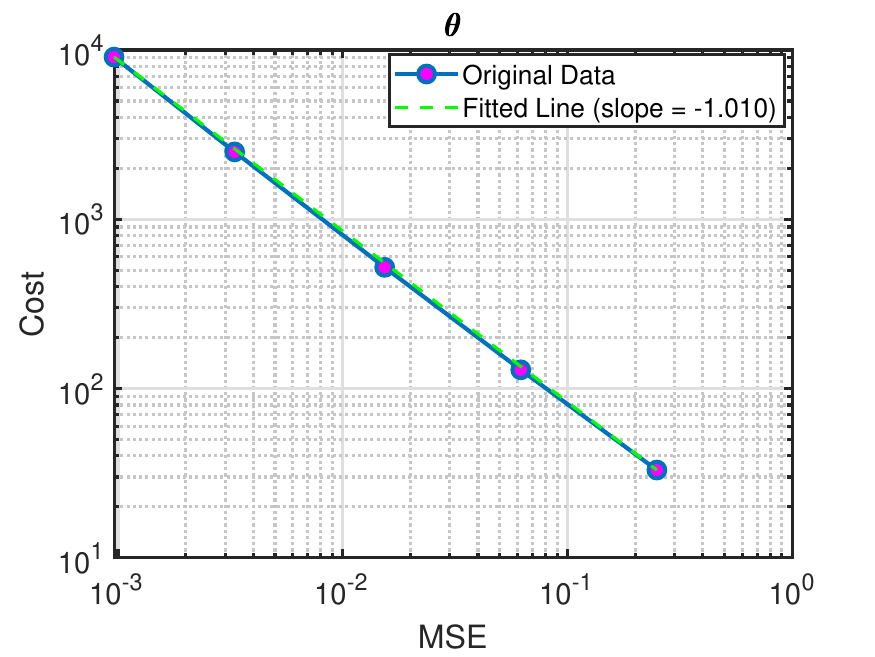}}\qquad
\subfloat[]{\includegraphics[width=0.45\textwidth]{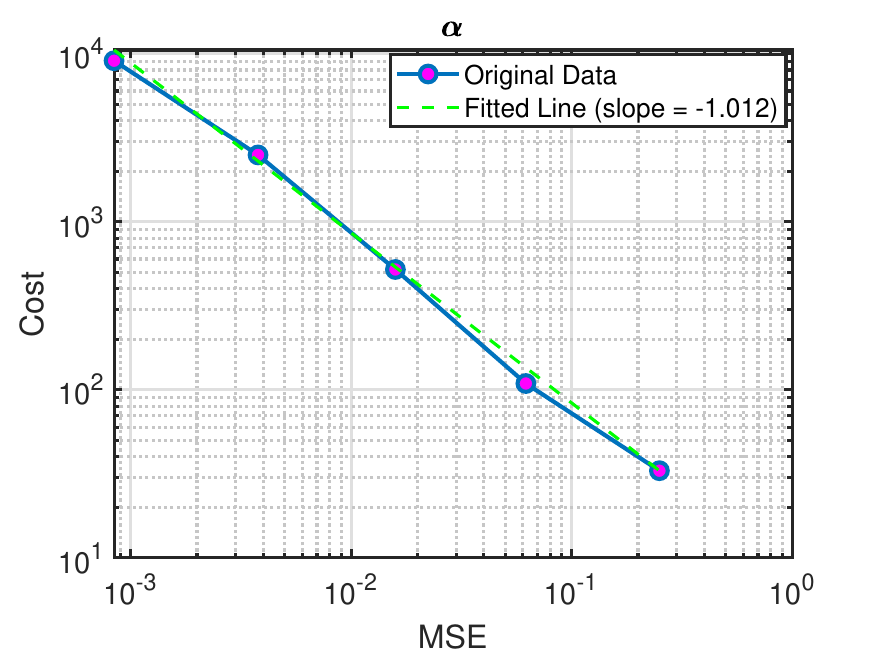}}\qquad
\subfloat[]{\includegraphics[width=0.45\textwidth]{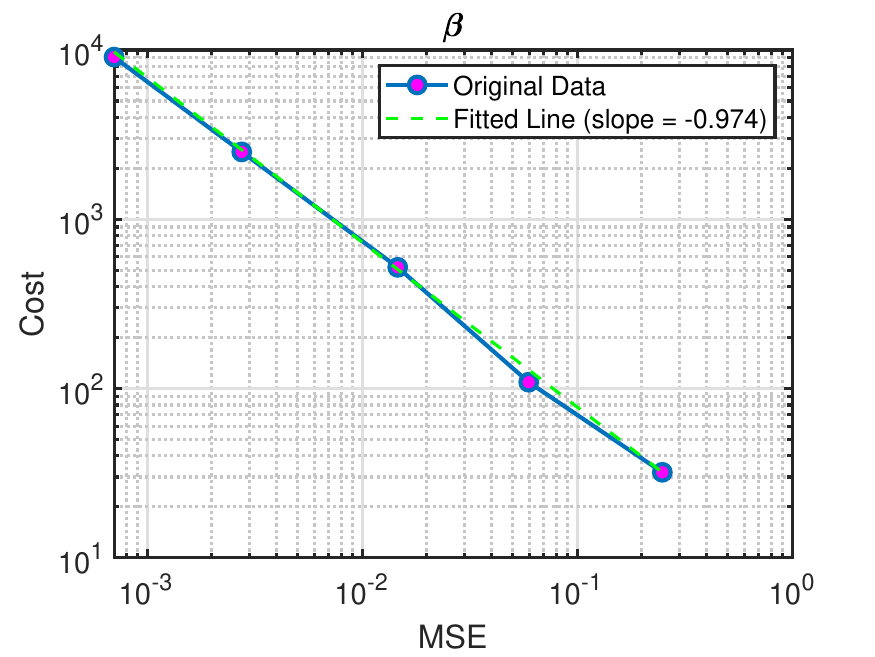}}
\caption{Log-log scale of MSE versus Cost for parameter estimation in the Kuramoto and neural network models, evaluated with $L=8$. (a) Estimation of parameter $\theta$; (b) Estimation of neural network parameter $\alpha$; (c) Estimation of neural network parameter $\beta$.}
\label{fig:Results2}
\end{figure}

\subsubsection*{Acknowledgements}

AJ was supported by CUHK-SZ start-up funding.

\appendix

\section{Theoretical Results}\label{app:theory}

\subsection{Structure}

This appendix is designed to be read in order after the main text has been also read.  The structure of the appendix is as follows.  In Section \ref{app:note} we give some additional notations that are needed.   In Section \ref{app:ass} our assumptions are given with a discussion.  In Section \ref{app:tech} we give several technical results that are needed for the proof of Theorem \ref{theo:main_thm} with the proof of the latter in Section \ref{app:main_theo}.

\subsection{Notations}\label{app:note}

Let $A$ be a symmetric real matrix,  then we denote
by $\textrm{det}(A)$ the determinant of $A$, $\lambda_{\min}(A)$ as the minimum eigenvalue,  $\lambda_{\max}(A)$ as the maximum eigenvalue.  For a $d\times d$ matrix $A$, that is diagonal we write $A=\textrm{diag}_d(a_1,\dots,a_d)$.  The collection of $d\times d$ real positive-definite and symmetric matrices is written $\mathcal{M}^{+}(\mathbb{R}^d)$.
For a given $\mu\in\mathcal{P}(\mathbb{R}^d)$ we write the transition kernel of the first order Euler-Maruyama 
time discretization over a time unit $\Delta$ (see \eqref{eq:sde_disc}) as $Q_{\mu,\theta}^{\Delta}(x,dy)$ and for any measurable functional $\varphi:\mathbb{R}^d\rightarrow\mathbb{R}$ we write $Q_{\mu,\theta}^{\Delta}(\varphi)(x)=\int_{\mathbb{R}^d}
\varphi(y)Q_{\mu,\theta}^{\Delta}(x,dy)$.  In what follows we set $V(x)= \rho x^{\top}x + 1$ for 
$(x,\rho)\in\mathbb{R}^d\times\mathbb{R}^+$  and we set $\mathsf{C}_r=\{x\in\mathbb{R}^d:V(x)\leq r\}$ where
$r>1$ is given.  The indicator function associated to a set $A\in\mathcal{B}(\mathbb{R}^d)$ is written $\mathbb{I}_A(x)$.  In the calculations to follow $C$ is a generic finite and positive constant and if there are any important dependencies (e.g.~on $\theta$)  we will outline them when needed.  
Let $\nu$ be a finite signed measure on $(\mathbb{R}^d,\mathcal{B}(\mathbb{R}^d))$; we denote
by $|\nu|$ the Hahn-Jordan (HJ) decomposition.  For a measurable function $f:\mathbb{R}^d\rightarrow\mathbb{R}$
we set
$$
\|f\|_{e^V} = \sup_{x\in\mathbb{R}^d}\frac{|f(x)|}{e^{V(x)}}.
$$
We also set $|\nu|(f)= \int_{\mathbb{R}^d}f(x)|\nu|(dx)$ where $|\nu|$ is the HJ decomposition of $\nu$ a finite signed measure.
For $x\in\mathbb{R}^d$ we denote the Euclidean norm as $\|x\|$.

\subsection{Model Structure and Assumptions}\label{app:ass}

We now reduce the type of SDEs that we will analyze and in particular,  we shall take $\sigma(x)=\textrm{diag}_d(\widetilde{\sigma}(x),\dots,\widetilde{\sigma}(x))$ for some $\widetilde{\sigma}:\mathbb{R}^d\rightarrow\mathbb{R}^+$.  For the drift coefficient,  we will use the short-hand 
$$
a_{\theta}(x,\mu) = a_{\theta}(x,\overline{\xi}_{\theta}(x,\mu)).
$$
We write the $d\times d$ matrix of first-order $x$ derivatives of $a_{\theta}(x,\mu)$ as $\nabla a_{\theta}(x,\mu)$ that is for $(i,j)\in\{1,\dots,d\}^2$
$$
\nabla a_{\theta}(x,\mu)^{(i,j)}= \frac{\partial a_{\theta}(x,\mu)^{(i)}}{\partial x_j}
$$
where $\nabla a_{\theta}(x,\mu)^{(i,j)}$ is the $(i,j)^{\textrm{th}}$ element of $\nabla a_{\theta}(x,\mu)$
and for a $d-$vector $x$ we write $x^{(k)}$ as the $k^{\textrm{th}}-$element of $x$.
As it will become useful below,  for $\kappa$ a $d-$vector,  we will write 
$\nabla a_{\theta}(\kappa x,\mu)$ to denote the $d\times d$ matrix with elements
for $(i,j)\in\{1,\dots,d\}^2$
$$
\nabla a_{\theta}(\kappa x,\mu)^{(i,j)}= \frac{\partial a_{\theta}(\kappa_i x,\mu)^{(i)}}{\partial x_j}.
$$
We make the following assumptions.

\begin{hypA}\label{ass:1_new}
$\Theta$ is open and bounded.
\end{hypA}

\begin{hypA}\label{ass:2_new}
For each $l\in\{0,\dots,L\}$ the function $\theta\mapsto\widehat{\nabla_{\theta}\log(p_{\theta}^{l,N_l}(y_1,\dots,y_T))}$ is twice continuously differentiable. Moreover there exists a unique root of 
$\widehat{\nabla_{\theta}\log(p_{\theta}^{l,N_l}(y_1,\dots,y_T))}$ which is denoted $\theta_{\star}^L$.
\end{hypA}

\begin{hypA}\label{ass:1}
We have:
\begin{enumerate}
\item{For each $(\theta,\mu)\in\Theta\times\mathcal{P}(\mathbb{R}^d)$  $a_{\theta}(\cdot,\mu)\in \mathcal{C}_b^1(\mathbb{R}^{d+1},\mathbb{R}^{d})$.  $\tilde{\sigma}\in\mathcal{C}_b^1(\mathbb{R}^d,\mathbb{R})$.}
\item{There exists a $C<+\infty$ such that for each $(i,k)\in\{1,\dots,d\}^2$
$$
\max\left\{
\sup_{\theta\in\Theta}\sup_{\mu\in\mathcal{P}(\mathbb{R}^d)}\sup_{x\in\mathbb{R}^d}
\left|\frac{\partial a_{\theta}^{(k)}(x,\mu)}{\partial x_i}\right|,
\sup_{\theta\in\Theta}\sup_{\mu\in\mathcal{P}(\mathbb{R}^d)} |a_{\theta}^{(k)}(0,\mu)|
\right\}
\leq C.
$$
}
\item{We have
\begin{eqnarray*}
\lambda^{\star}(1):=\sup_{\theta\in\Theta}\sup_{\mu\in\mathcal{P}(\mathbb{R}^d)}
\sup_{x\in\mathbb{R}^d}
\sup_{\kappa\in(0,1)^d}
\lambda_{\max}\left(\tfrac{1}{2}\left(\nabla a_{\theta}(\kappa x,\mu)^{\top}+\nabla a_{\theta}(\kappa x,\mu)\right)\right) & < & 0\\
\lambda^{\star}(2):=\sup_{\theta\in\Theta}\sup_{\mu\in\mathcal{P}(\mathbb{R}^d)}
\sup_{x\in\mathbb{R}^d}
\sup_{\kappa\in(0,1)^d}
\lambda_{\max}\left((\nabla a_{\theta}(\kappa x,\mu)^{\top}\nabla a_{\theta}(\kappa x,\mu)\right) & < & 0
\end{eqnarray*}
}
\item{There exists a $0<\underline{C}<\overline{C}<+\infty$ such that for every $x\in\mathbb{R}^d$
$$
\underline{C} \leq  \tilde{\sigma}(x) \leq \overline{C}.
$$
}
\end{enumerate}
\end{hypA}

Set
\begin{equation}\label{eq:lambda_star}
\lambda^{\star}=\Delta^2\lambda^{\star}(1)+2\Delta\lambda^{\star}(2).
\end{equation}

\begin{hypA}\label{ass:2}
There exists a $C<+\infty$ such that for any $(\theta,\theta',k)\in\Theta^2\times\{1,\dots,d\}$
$$
\max\left\{\sup_{\mu\in\mathcal{P}(\mathbb{R}^d)}
\sup_{x\in\mathbb{R}^d}|a_{\theta}(x,\mu)^{(k)}-a_{\theta'}(x,\mu)^{(k)}|,
\sup_{\mu\in\mathcal{P}(\mathbb{R}^d)}
\sup_{x\in\mathbb{R}^d}
|\{a_{\theta}(x,\mu)^{(k)}\}^2-\{a_{\theta'}(x,\mu)^{(k)}\}^2|
\right\} \leq 
$$
$$
C\|\theta-\theta'\|\log\left(\|\theta-\theta'\|\right).
$$
\end{hypA}

\begin{hypA}\label{ass:3}
For each $y\in\mathsf{Y}$ there exists a $C>1$ such that for any $(x,\theta)\in\mathbb{R}^d\times\Theta$
$\tfrac{1}{C}\leq G_{\theta}(x,y), |\nabla_{\theta}G_{\theta}(x,y)| \leq C$.
\end{hypA}

Recall $H_l$ as in \eqref{eq:Hl_def}. For any sequence of probability measures 
$\eta_{0},\dots,\eta_{\Delta_{l}^{-1}T-1}$ where for any $k\in\{0,\dots,\Delta_{l}^{-1}T-1\}$
$\eta_{k}\in\mathcal{P}(\mathbb{R}^d)$,  we will write $H_{l,\eta}(\theta,x_0,\dots,x_T)$ to denote $H_l$ where this particular sequence has been plugged in.
Recall from Section \ref{sec:hl_comp} that we will approximate deriviatives associated to $\overline{\xi}_{\theta}$ via finite differences and we write then this as $\widehat{H}_{l,\eta}(\theta,x_0,\dots,x_T)$.  We will use the notation
$\eta\in\mathcal{P}(\mathbb{R}^d)^{\Delta_{l}^{-1}T}$ to mean that each $\eta_{k}\in\mathcal{P}(\mathbb{R}^d)$
for $k\in\{0,\dots,\Delta_{l}^{-1}T-1\}$.

\begin{hypA}\label{ass:4}
There exists a $(\beta,C)\in(0,1]\times(0,\infty)$ such that for any $(x_{\Delta_l},\dots,x_T)\in\mathsf{E}_l^T$
\begin{eqnarray*}
\sup_{\theta\in\Theta}
\sup_{\eta\in\mathcal{P}(\mathbb{R}^d)^{\Delta_{l}^{-1}T}}
\frac{|\widehat{H}_{l,\eta}(\theta,x_0,\dots,x_T)
|}{\tfrac{1}{T}\sum_{t=1}^T\sum_{k=0}^{\Delta_l^{-1}-1}\exp\{x_{t-1+(k+1)\Delta_l}^{\top}x_{t-1+(k+1)\Delta_l}+1\}} & \leq & C\\
\sup_{(\theta,\theta')\in\Theta^2, \theta\neq\theta'}
\sup_{\eta\in\mathcal{P}(\mathbb{R}^d)^{\Delta_{l}^{-1}T}}
\frac{|\widehat{H}_{l,\eta}(\theta,x_0,\dots,x_T)
-\widehat{H}_{l,\eta}(\theta',x_0,\dots,x_T)
|}{\|\theta-\theta'\|^{\beta}\tfrac{1}{T}\sum_{t=1}^T\sum_{k=0}^{\Delta_l^{-1}-1}\exp\{x_{t-1+(k+1)\Delta_l}^{\top}x_{t-1+(k+1)\Delta_l}+1\}}
& \leq & C.
\end{eqnarray*}
\end{hypA}

\subsubsection{Discussion of Assumptions}

Several of our assumptions deserve some discussion.  (A\ref{ass:1_new}-\ref{ass:2_new}) are fairly typical in the context of parameter estimation of state-space models; see for instance \cite{cappe}.
For (A\ref{ass:1}),  it is known that 1.~is enough to guarantee a unique strong solution to \eqref{eq:sde}; see for instance \cite{ver}.  2.~is needed to control certain expectations that occur in our computations.  In terms of 3.~in the context of regular SDEs it is known (e.g.~\cite{delm}) to be used for stability properties of the SDE and indeed as our analysis is related to the ergodicity of the time-discretization (see \cite{delm1,whiteley} for coverage of these properties) one needs to pass the stability of the SDE to that of the discrete-time approximation. The rest of the assumptions are essentially used to help verify the assumptions in \cite{ub_grad_new} which is the approach we will use and in addition,  the assumptions are in the main discussed there.
 
\subsection{Technical Results}\label{app:tech}

\begin{lem}\label{lem:comp_sq}
Let $X\sim\mathcal{N}_d(\mu,\Delta\Sigma)$, for $(\Delta,\Sigma)\in\mathbb{R}^+\times\mathcal{M}^{+}(\mathbb{R}^d)$.  Let $V(x)=\rho x^{\top}x + 1$,  $(x,\rho)\in\mathbb{R}^d\times\mathbb{R}$ with $0<\rho< \frac{ \lambda_{\min}(\Sigma^{-1}) }{2\Delta}$. Then we have that
$$
\mathbb{E}[\exp\{V(X)\}] = \frac{\textrm{\emph{det}}(\Sigma)^{-1/2}}{\textrm{\emph{det}}(\widetilde{\Sigma})^{1/2}}
\exp\left\{
\frac{1}{2\Delta}\mu^{\top}\left(
\Sigma^{-1}\widetilde{\Sigma}^{-1}\Sigma^{-1}-\Sigma^{-1}
\right)\mu + 1
\right\}
$$
where $\widetilde{\Sigma}=\Sigma^{-1}-\textrm{\emph{diag}}_d(2\Delta\rho,\dots,2\Delta\rho)$.
\end{lem}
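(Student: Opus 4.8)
The plan is to reduce the claim to a standard multivariate Gaussian integral by completing the square in the exponent. First I would write out the density of $X\sim\mathcal{N}_d(\mu,\Delta\Sigma)$ and pull out the trivial factor $e^1$ coming from the additive constant in $V(x)=\rho x^\top x+1$, so that the remaining task is to evaluate
\[
\mathbb{E}[\exp\{\rho X^\top X\}] = \frac{1}{(2\pi)^{d/2}\Delta^{d/2}\det(\Sigma)^{1/2}}\int_{\mathbb{R}^d}\exp\left\{\rho x^\top x - \frac{1}{2\Delta}(x-\mu)^\top\Sigma^{-1}(x-\mu)\right\}dx.
\]

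Next I would expand the Gaussian exponent and collect the quadratic and linear terms in $x$. The quadratic form has matrix $A:=\frac{1}{2\Delta}\Sigma^{-1}-\rho I=\frac{1}{2\Delta}\widetilde{\Sigma}$, using exactly the definition $\widetilde{\Sigma}=\Sigma^{-1}-\mathrm{diag}_d(2\Delta\rho,\dots,2\Delta\rho)$; the linear coefficient is $b=\frac{1}{\Delta}\Sigma^{-1}\mu$ and the constant is $c=-\frac{1}{2\Delta}\mu^\top\Sigma^{-1}\mu$. Here the hypothesis $0<\rho<\lambda_{\min}(\Sigma^{-1})/(2\Delta)$ is precisely what gives $\lambda_{\min}(\widetilde{\Sigma})=\lambda_{\min}(\Sigma^{-1})-2\Delta\rho>0$, hence $A$ is positive definite and the integral converges. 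This integrability check is the only genuinely substantive point, and I would dispatch it first so that the subsequent formal manipulations are justified.

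I would then complete the square, writing $-x^\top Ax+b^\top x+c=-(x-\tfrac12 A^{-1}b)^\top A(x-\tfrac12 A^{-1}b)+\tfrac14 b^\top A^{-1}b+c$, and invoke the standard identity $\int_{\mathbb{R}^d}\exp\{-(x-m)^\top A(x-m)\}\,dx=\pi^{d/2}\det(A)^{-1/2}$ valid for positive-definite $A$. Finally I would assemble the three pieces. For the prefactor, $\det(A)=(2\Delta)^{-d}\det(\widetilde{\Sigma})$ gives $\det(A)^{-1/2}=(2\Delta)^{d/2}\det(\widetilde{\Sigma})^{-1/2}$, and all powers of $2$, $\pi$ and $\Delta$ cancel against $(2\pi)^{d/2}\Delta^{d/2}$, leaving exactly $\det(\Sigma)^{-1/2}\det(\widetilde{\Sigma})^{-1/2}$. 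For the exponent, substituting $A^{-1}=2\Delta\widetilde{\Sigma}^{-1}$ into $\tfrac14 b^\top A^{-1}b+c$ yields $\frac{1}{2\Delta}\mu^\top(\Sigma^{-1}\widetilde{\Sigma}^{-1}\Sigma^{-1}-\Sigma^{-1})\mu$, and restoring the factor $e^1$ reproduces the stated formula.

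No step presents a real obstacle beyond careful bookkeeping; the points to watch are the positive-definiteness condition (which is exactly the stated bound on $\rho$ and secures convergence) and keeping the powers of $2\Delta$ consistent between the determinant prefactor and the inverse $A^{-1}$ appearing in the exponent.
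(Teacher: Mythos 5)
Your proposal is correct and follows essentially the same route as the paper: both complete the square in the exponent, note that the bound $0<\rho<\lambda_{\min}(\Sigma^{-1})/(2\Delta)$ makes $\widetilde{\Sigma}=\Sigma^{-1}-2\Delta\rho I_d$ positive definite (so the resulting Gaussian integral converges), and then integrate out $x$. The paper merely writes the completed square multiplicatively in a single identity with center $\alpha=\widetilde{\Sigma}^{-1}\Sigma^{-1}\mu$, whereas you organize the same bookkeeping via $A=\tfrac{1}{2\Delta}\widetilde{\Sigma}$, $b=\tfrac{1}{\Delta}\Sigma^{-1}\mu$, $c=-\tfrac{1}{2\Delta}\mu^{\top}\Sigma^{-1}\mu$; the constants and the exponent you obtain agree exactly with the stated formula.
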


\begin{proof}
We note that clearly one can write
$$
\frac{\textrm{det}(\Sigma)^{-1/2}}{(2\pi\Delta)^{d/2}}
\exp\left\{-\frac{1}{2\Delta}(x-\mu)^{\top}\Sigma^{-1}(x-\mu)\right\}
\exp\left\{\rho x^{\top}x + 1\right\} = 
$$
$$
\frac{\textrm{det}(\Sigma)^{-1/2}}{(2\pi\Delta)^{d/2}\textrm{det}(\widetilde{\Sigma})^{1/2}}
\textrm{det}(\widetilde{\Sigma})^{1/2}
\exp\left\{\frac{1}{2\Delta}
\mu^{\top}\left(
\Sigma^{-1}\widetilde{\Sigma}^{-1}\Sigma^{-1}-\Sigma^{-1}
\right)\mu + 1
\right\}
\exp\left\{-\frac{1}{2\Delta}(x-\alpha)^{\top}\widetilde{\Sigma}(x-\alpha)\right\}
$$
where $\alpha=\widetilde{\Sigma}^{-1}\Sigma^{-1}\mu$ and the inverse of $\widetilde{\Sigma}$ exists due to
the condition on $\rho$. The proof is now easily completed by integrating over $x$.
\end{proof}

In the below,  recall \eqref{eq:lambda_star}.
\begin{lem}\label{lem:drift}
Assume (A\ref{ass:1}).  Then for any $\Delta>0$ there exists a $C<+\infty$ such that if
\begin{eqnarray*}
1>\tilde{\phi}& >& \max\{0,1+\xi\} \\
0< \rho & < &\min\left\{
\frac{1}{2\Delta\overline{C}},
\frac{1}{2\Delta\underline{C}}\left(1-\frac{1+\lambda^{\star}}{\tilde{\phi}}\right)
\right\}\\
r& > &1+\left(\frac{C}{1-\tilde{\phi}}\right)^2\\
1> \phi & = & \tilde{\phi} + \frac{C}{\sqrt{r-1}}>0\\
 b & = & C
\end{eqnarray*}
then for any $x\in\mathbb{R}^d$
$$
\sup_{\theta\in\Theta}\sup_{\mu\in\mathcal{P}(\mathbb{R}^d)} Q_{\mu,\theta}^{\Delta}(e^V)(x) \leq \exp\left\{\phi V(x) + b\mathbb{I}_{\mathsf{C}_r}(x)\right\}.
$$
\end{lem}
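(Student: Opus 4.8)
The plan is to evaluate $Q_{\mu,\theta}^{\Delta}(e^V)(x)$ exactly as a Gaussian integral by means of Lemma \ref{lem:comp_sq}, and then to turn the resulting closed form into the asserted geometric drift by controlling the one-step Euler--Maruyama mean through the dissipativity hypotheses in (A\ref{ass:1}). Since we have reduced to $\sigma(x)=\textrm{diag}_d(\widetilde{\sigma}(x),\dots,\widetilde{\sigma}(x))$, one step of \eqref{eq:sde_disc} shows that under $Q_{\mu,\theta}^{\Delta}(x,\cdot)$ the output is $Y\sim\mathcal{N}_d\big(x+\Delta a_{\theta}(x,\mu),\,\Delta\widetilde{\sigma}(x)^2 I_d\big)$. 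Hence $Q_{\mu,\theta}^{\Delta}(e^V)(x)=\mathbb{E}[\exp\{V(Y)\}]$ is precisely the quantity computed in Lemma \ref{lem:comp_sq} with mean $x+\Delta a_{\theta}(x,\mu)$ and $\Sigma=\widetilde{\sigma}(x)^2 I_d$; the admissibility hypothesis $0<\rho<\lambda_{\min}(\Sigma^{-1})/(2\Delta)=1/(2\Delta\widetilde{\sigma}(x)^2)$ there holds uniformly in $(x,\theta,\mu)$ because $\widetilde{\sigma}(x)\le\overline{C}$ and $\rho$ obeys its first upper bound. As $\Sigma$ is scalar, all matrices in Lemma \ref{lem:comp_sq} are multiples of $I_d$ and the formula collapses to
$$
Q_{\mu,\theta}^{\Delta}(e^V)(x)=\big(1-2\Delta\rho\,\widetilde{\sigma}(x)^2\big)^{-d/2}\exp\left\{w(x)\,\rho\,\big\|x+\Delta a_{\theta}(x,\mu)\big\|^2+1\right\},\qquad w(x):=\frac{1}{1-2\Delta\rho\,\widetilde{\sigma}(x)^2}.
$$

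The central estimate is a uniform bound on the drift $\|x+\Delta a_{\theta}(x,\mu)\|^2$. Applying the mean value theorem coordinate-wise along the segment from $0$ to $x$ gives $a_{\theta}(x,\mu)=a_{\theta}(0,\mu)+\nabla a_{\theta}(\kappa x,\mu)\,x$ for some $\kappa\in(0,1)^d$, in the notation fixed before the assumptions. Writing $A=\nabla a_{\theta}(\kappa x,\mu)$ and expanding the square,
$$
\big\|x+\Delta a_{\theta}(x,\mu)\big\|^2=x^{\top}(I+\Delta A)^{\top}(I+\Delta A)x+2\Delta\langle(I+\Delta A)x,a_{\theta}(0,\mu)\rangle+\Delta^2\|a_{\theta}(0,\mu)\|^2.
$$
The eigenvalue bounds in (A\ref{ass:1}) on $\tfrac12(A+A^{\top})$ and on $A^{\top}A$ give $x^{\top}(I+\Delta A)^{\top}(I+\Delta A)x\le(1+\lambda^{\star})\|x\|^2$ with $\lambda^{\star}$ as in \eqref{eq:lambda_star}, while the boundedness of $a_{\theta}(0,\mu)$ and of $\|A\|$ furnished by (A\ref{ass:1}), together with Cauchy--Schwarz, control the remaining two terms. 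This yields $\|x+\Delta a_{\theta}(x,\mu)\|^2\le(1+\lambda^{\star})\|x\|^2+C\|x\|+C$ uniformly in $(\theta,\mu)$, for a constant $C$ depending only on $\Delta$ and the constants of (A\ref{ass:1}).

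Substituting this into the closed form and taking logarithms, the prefactor contributes a bounded constant (as $\widetilde{\sigma}(x)^2\in[\underline{C}^2,\overline{C}^2]$ keeps $w(x)$ and $\log w(x)$ bounded), and the leading term becomes $w(x)\rho(1+\lambda^{\star})\|x\|^2$. The second upper bound on $\rho$ is exactly what forces $w(x)(1+\lambda^{\star})\le\widetilde{\phi}$ uniformly in $x$, so this leading term is at most $\widetilde{\phi}\rho\|x\|^2=\widetilde{\phi}(V(x)-1)$. I would then split on $\mathsf{C}_r$: for $x\notin\mathsf{C}_r$ one has $\rho\|x\|^2>r-1$, hence $\|x\|\le\rho^{1/2}(r-1)^{-1/2}\|x\|^2$ and every additive constant is likewise dominated by a multiple of $\|x\|^2$, so the residual linear and constant contributions are absorbed into an extra $C(r-1)^{-1/2}V(x)$; this gives $\log Q_{\mu,\theta}^{\Delta}(e^V)(x)\le\phi V(x)$ with $\phi=\widetilde{\phi}+C(r-1)^{-1/2}$, and the lower bound imposed on $r$ is precisely what guarantees $\phi<1$. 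For $x\in\mathsf{C}_r$ every term is bounded, so the same inequality holds up to the additive $b=C$, i.e.\ $\log Q_{\mu,\theta}^{\Delta}(e^V)(x)\le\phi V(x)+b$. Since all bounds are uniform in $\theta$ and $\mu$, taking suprema finishes the proof. I expect the main obstacle to lie in this last bookkeeping step: one must choose $\rho,r,\phi,b$ so that the inflation factor $w(x)>1$ coming from the Gaussian integral does not overwhelm the contraction $\lambda^{\star}<0$ inherited from (A\ref{ass:1}) while simultaneously the linear-in-$\|x\|$ remainder is swallowed outside the small set $\mathsf{C}_r$, and it is exactly the interlocking system of inequalities in the hypotheses that makes these two requirements compatible.
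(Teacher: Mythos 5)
Your proposal is correct and follows essentially the same route as the paper's proof: the exact Gaussian evaluation via Lemma \ref{lem:comp_sq}, the mean-value expansion $a_{\theta}(x,\mu)=a_{\theta}(0,\mu)+\nabla a_{\theta}(\kappa x,\mu)x$, the eigenvalue bound forcing the quadratic term below $\tilde{\phi}V(x)$ via the second condition on $\rho$, and Cauchy--Schwarz plus absorption of the linear and constant remainders into $C(r-1)^{-1/2}V(x)$ off $\mathsf{C}_r$ and into $b=C$ on it. The only (immaterial) difference is your writing the one-step covariance as $\Delta\tilde{\sigma}(x)^2 I_d$ where the paper's displays use $\Delta\tilde{\sigma}(x)$; both are bounded above and below by constants, so the argument is unchanged.
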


\begin{proof}
We have by Lemma \ref{lem:comp_sq} that
\begin{equation}\label{eq:main_eq}
Q_{\mu,\theta}^{\Delta}(e^V)(x) = 
\frac{\tilde{\sigma}(x)^{-d/2}}{\left(\tfrac{1}{\tilde{\sigma}(x)}-2\Delta\rho\right)^{d/2}}
\exp\left\{
\frac{\rho}{(1-2\Delta\rho\tilde{\sigma}(x))}
(a_{\theta}(x,\mu)\Delta + x)^{\top}(a_{\theta}(x,\mu)\Delta + x) + 1
\right\}
\end{equation}
which is well-defined due to the condition on $\rho$.
Clearly by (A\ref{ass:1}) 4.~and our assumption on $\rho$ there exists a $C$ such that
\begin{equation}\label{eq:minor_bound}
\frac{\tilde{\sigma}(x)^{-d/2}}{\left(\tfrac{1}{\tilde{\sigma}(x)}-2\Delta\rho\right)^{d/2}} \leq C.
\end{equation}
As a result to complete the proof we will first consider the exponent of the R.H.S.~of \eqref{eq:main_eq}.
We have by using the first mean-value theorem that
\begin{equation}\label{eq:fmvt}
a_{\theta}(x,\mu) = a_{\theta}(0,\mu) + \nabla a_{\theta}(\kappa x,\mu)x 
\end{equation}
where $\kappa$ is a $d-$ vector of values on $(0,1)$.
Then we have that
$$
(a_{\theta}(x,\mu)\Delta + x)^{\top}(a_{\theta}(x,\mu)\Delta + x) = T_1 + T_2
$$
where
\begin{eqnarray*}
T_1 & = & (\{\nabla a_{\theta}(\kappa x,\mu)\Delta+I_d\}x)^{\top}(\{\nabla a_{\theta}(\kappa x,\mu)\Delta+I_d\}x) \\
T_2 & = & 
2\Delta a_{\theta}(0,\mu)^{\top}(\{\nabla a_{\theta}(\kappa x,\mu)\Delta+I_d\}x) 
+ 
\Delta^2 a_{\theta}(0,\mu)^{\top}a_{\theta}(0,\mu)
\end{eqnarray*}
where $I_d$ is the $d\times d$ identity matrix.
We will deal with the two terms $T_1$ and $T_2$ individually. 

For $T_1$ our objective is to show that for every $x\in\mathbb{R}^d$
\begin{equation}\label{eq:t1}
\frac{\rho}{(1-2\Delta\rho\tilde{\sigma}(x))} T_1 \leq \tilde{\phi}V(x)
\end{equation}
where $\tilde{\phi}\in(0,1)$ and does not depend on $\theta,\mu$ but could depend on $\Delta$.
Now if $x=0$ or if $\{\nabla a_{\theta}(\kappa x,\mu)\Delta+I_d\}$ is negative-definite,  verifying 
\eqref{eq:t1} is trivial,  so we suppose that neither is the case.  Then we have 
$$
\frac{\rho}{(1-2\Delta\rho\tilde{\sigma}(x))V(x)} T_1 \leq \frac{x^{\top}
(\{\nabla a_{\theta}(\kappa x,\mu)\Delta+I_d\})^{\top}
(\{\nabla a_{\theta}(\kappa x,\mu)\Delta+I_d\})
x}
{(1-2\Delta\rho\tilde{\sigma}(x))x^{\top}x}.
$$
It is now enough to show that the R.H.S.~of the above inequality is upper-bounded by $\tilde{\phi}$,  which happens if
the matrix
\begin{equation}\label{eq:mat_lya}
\Delta^2 \nabla a_{\theta}(\kappa x,\mu)^{\top}\nabla a_{\theta}(\kappa x,\mu) + 2\Delta \tfrac{1}{2}(\nabla a_{\theta}(\kappa x,\mu)^{\top}+\nabla a_{\theta}(\kappa x,\mu)) + (1-\tilde{\phi}\{1-2\Delta\rho\tilde{\sigma}(x)\}) I_d
\end{equation}
is negative definite.  Since all the matrices $\nabla a_{\theta}(\kappa x,\mu)^{\top}\nabla a_{\theta}(\kappa x,\mu)$,
$\tfrac{1}{2}(\nabla a_{\theta}(\kappa x,\mu)^{\top}+\nabla a_{\theta}(\kappa x,\mu))$ 
and $(1-\tilde{\phi}\{1-2\Delta\rho\tilde{\sigma}(x)\})I_d$
are symmetric so is the sum and all the eigenvalues are real.  The largest eigenvalue is
$$
\Delta^2\lambda_{\max}\left(\nabla a_{\theta}(\kappa x,\mu)^{\top}\nabla a_{\theta}(\kappa x,\mu)\right) + 
2\Delta\lambda_{\max}\left(\tfrac{1}{2}(\nabla a_{\theta}(\kappa x,\mu)^{\top}+\nabla a_{\theta}(\kappa x,\mu))\right) + 
(1-\tilde{\phi}\{1-2\Delta\rho\tilde{\sigma}(x)\}) \leq
$$
\begin{equation}\label{eq:mat_lya1}
\Delta^2\lambda^{\star}(1)+2\Delta\lambda^{\star}(2) + 
(1-\tilde{\phi}\{1-2\Delta\rho\tilde{\sigma}(x)\})
\end{equation}
where $\lambda^{\star}(1)$ and $\lambda^{\star}(2)$ are defined in (A\ref{ass:1}) 3.;
it now suffices to show that \eqref{eq:mat_lya1} is less than zero.
Now choose $1>\tilde{\phi}>\max\{0,1+\Delta^2\lambda^{\star}(1)+2\Delta\lambda^{\star}(2)\}$, 
then the largest eigenvalue of the matrix in \eqref{eq:mat_lya}
is negative when
$$
\rho < \frac{1}{2\Delta\tilde{\sigma}(x)}\left(1-\frac{1+\lambda^{\star}}{\tilde{\phi}}\right) \leq 
\frac{1}{2\Delta\underline{C}}\left(1-\frac{1+\lambda^{\star}}{\tilde{\phi}}\right)
$$
which is possible as the R.H.S.~is positive and we have assumed the condition on $\rho$ in the statement.  Hence we have verified that \eqref{eq:t1} holds and that moreover
$\tilde{\phi}$ does not depend on $\theta$ or $\mu$.

For $T_2$ our objective is to show that for every $x\in\mathbb{R}^d$
\begin{equation}\label{eq:mat_lya2}
\frac{\rho}{(1-2\Delta\rho\tilde{\sigma}(x))} T_2 \leq C\left(\frac{1}{\sqrt{r-1}}\mathbb{I}_{\mathsf{C}_r^c}(x) V(x) + \mathbb{I}_{\mathsf{C}_r}(x)\right)
\end{equation}
where $0<C<+\infty$ and does not depend on $\theta,\mu$ but could depend on $\Delta$ and $r$.
By (A\ref{ass:1}) 2.~we note that for any $x,\theta,\mu$
$$
\frac{\rho}{(1-2\Delta\rho\tilde{\sigma}(x))}
\Delta^2a_{\theta}(0,\mu)^{\top}a_{\theta}(0,\mu) \leq C
$$
where $C$ does not depend on $x,\theta,\mu$ so we need only  consider the first term in the definition of $T_2$.
Then we note that using (A\ref{ass:1}) 2.~and the Cauchy-Schwarz inequality that for any $x,\theta,\mu$
\begin{equation}\label{eq:mat_lya3}
\frac{\rho}{(1-2\Delta\rho\tilde{\sigma}(x))}2\Delta a_{\theta}(0,\mu)^{\top}(\{\nabla a_{\theta}(\kappa x,\mu)\Delta+I_d\}x)  \leq C(x^{\top}x)^{1/2}.
\end{equation}
where $C$ does not depend on $x,\theta,\mu$. Now if $x\in\mathsf{C}_r$ clearly the R.H.S.~of \eqref{eq:mat_lya3} is upper-bounded by a constant $C$ that does not depend on $x,\theta,\mu$ but does depend on $\Delta$ and $r$. If $x\in\mathsf{C}_r^c$ then it is clear that the R.H.S.~of \eqref{eq:mat_lya3} 
when divided by $V(x)$ 
is upper-bounded by a term that is $\mathcal{O}(1/(\sqrt{r-1}))$; this verifies \eqref{eq:mat_lya2}. 

Now combining \eqref{eq:main_eq} with the results that are proved in \eqref{eq:minor_bound}, 
\eqref{eq:t1} and \eqref{eq:mat_lya2} we have shown that
$$
Q_{\mu,\theta}^{\Delta}(e^V)(x) \leq \exp\left\{
\left(
\tilde{\phi} + \frac{C}{\sqrt{r-1}}\mathbb{I}_{\mathsf{C}_r^c}(x)
\right)V(x) + 
C\mathbb{I}_{\mathsf{C}_r}(x)
\right\}.
$$
If one chooses 
$$
r>1+\left(\frac{C}{1-\tilde{\phi}}\right)^2
$$
and then 
$$
\phi = \tilde{\phi} + \frac{C}{\sqrt{r-1}}\quad\quad b=C
$$
we have shown that 
$$
\sup_{\theta\in\Theta}\sup_{\mu\in\mathcal{P}(\mathbb{R}^d)} Q_{\mu,\theta}^{\Delta}(e^V)(x) \leq \exp\left\{\phi V(x) + b\mathbb{I}_{\mathsf{C}_r}(x)\right\}
$$
as was to be proved.
\end{proof}

\begin{lem}
Assume (A\ref{ass:1}).  Then for any $\Delta>0$ and $r>1$ there exists $(\nu,\epsilon)\in\mathcal{P}(\mathbb{R}^d)\times(0,1)$ such that for any $x\in\mathsf{C}_r$:
$$
\inf_{\theta\in\Theta}\inf_{\mu\in\mathcal{P}(\mathbb{R}^d)} Q_{\mu,\theta}^{\Delta}(x,dy) \geq \epsilon \nu(dy).
$$
\end{lem}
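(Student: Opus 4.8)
The plan is to establish a local Doeblin (minorization) condition by using that $Q_{\mu,\theta}^{\Delta}$ is a Gaussian kernel whose mean and covariance can be controlled uniformly over $(x,\theta,\mu)\in\mathsf{C}_r\times\Theta\times\mathcal{P}(\mathbb{R}^d)$. First I would write the kernel explicitly: from the Euler--Maruyama recursion the increment $\sigma(x)[W_{\Delta}-W_0]$ is $\mathcal{N}_d(0,\Delta\Sigma(x))$, so $Q_{\mu,\theta}^{\Delta}(x,dy)$ admits the Lebesgue density
$$
q_{\mu,\theta}(x,y) = \frac{1}{(2\pi)^{d/2}\textrm{det}(\Delta\Sigma(x))^{1/2}}\exp\left\{-\tfrac{1}{2}(y-m_{\theta}(x,\mu))^{\top}(\Delta\Sigma(x))^{-1}(y-m_{\theta}(x,\mu))\right\},
$$
where $m_{\theta}(x,\mu)=x+\Delta a_{\theta}(x,\mu)$ is the one-step mean.

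The second step is to bound the mean and the covariance uniformly on $\mathsf{C}_r$. Since $x\in\mathsf{C}_r$ forces $\|x\|^2\leq (r-1)/\rho$, the decomposition \eqref{eq:fmvt}, namely $a_{\theta}(x,\mu)=a_{\theta}(0,\mu)+\nabla a_{\theta}(\kappa x,\mu)x$, combined with the uniform bounds on $|a_{\theta}^{(k)}(0,\mu)|$ and on the entries of $\nabla a_{\theta}$ supplied by (A\ref{ass:1}) 2., yields $\sup_{\theta,\mu}\|a_{\theta}(x,\mu)\|\leq C(1+\|x\|)$ and hence a constant $R<\infty$, depending only on $\Delta,r,\rho,d$ and the constants in (A\ref{ass:1}), with $\sup_{x\in\mathsf{C}_r}\sup_{\theta,\mu}\|m_{\theta}(x,\mu)\|\leq R$. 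By (A\ref{ass:1}) 4., $\Sigma(x)$ is diagonal with diagonal entries bounded above and bounded below away from zero, so there exist $0<\underline{c}\leq\overline{c}<\infty$, independent of $x,\theta,\mu$, such that $\textrm{det}(\Delta\Sigma(x))^{1/2}\leq(\Delta\overline{c})^{d/2}$ and $\lambda_{\max}((\Delta\Sigma(x))^{-1})\leq (\Delta\underline{c})^{-1}$.

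Finally I would localise $y$ to a fixed ball and read off the minorizing measure. Fix any $R_0>0$ and set $B=\{y\in\mathbb{R}^d:\|y\|\leq R_0\}$. For $x\in\mathsf{C}_r$ and $y\in B$ the exponent obeys $(y-m_{\theta}(x,\mu))^{\top}(\Delta\Sigma(x))^{-1}(y-m_{\theta}(x,\mu))\leq (\Delta\underline{c})^{-1}(R_0+R)^2$, so $q_{\mu,\theta}(x,y)\geq c$, where $c=(2\pi)^{-d/2}(\Delta\overline{c})^{-d/2}\exp\{-(R_0+R)^2/(2\Delta\underline{c})\}>0$ does not depend on $x,\theta,\mu$. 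Taking $\nu(dy)=\textrm{Vol}(B)^{-1}\mathbb{I}_{B}(y)\,dy$ and $\epsilon=c\,\textrm{Vol}(B)$ gives $Q_{\mu,\theta}^{\Delta}(x,dy)\geq c\,\mathbb{I}_{B}(y)\,dy=\epsilon\,\nu(dy)$ for all $x\in\mathsf{C}_r$, uniformly in $\theta$ and $\mu$; moreover $\epsilon\in(0,1)$ since $\epsilon\leq\int_{B}q_{\mu,\theta}(x,y)\,dy<\int_{\mathbb{R}^d}q_{\mu,\theta}(x,y)\,dy=1$, the Gaussian placing strictly positive mass outside $B$. The only genuine work here is the uniform control of the mean over \emph{all} $\mu\in\mathcal{P}(\mathbb{R}^d)$ and $\theta\in\Theta$; this is precisely what the suprema in (A\ref{ass:1}) 1.--2.\ are designed to furnish, so once that bound is secured the remaining Gaussian estimates, and the verification that $\nu$ and $\epsilon$ are independent of $(\theta,\mu)$, are routine.
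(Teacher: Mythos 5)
Your proposal is correct and follows essentially the same route as the paper: both exploit the explicit Gaussian form of $Q_{\mu,\theta}^{\Delta}(x,\cdot)$, use the first mean-value theorem together with (A\ref{ass:1}) 2.\ and 4.\ to control the one-step mean and covariance uniformly over $(x,\theta,\mu)\in\mathsf{C}_r\times\Theta\times\mathcal{P}(\mathbb{R}^d)$, and then minorize the density by a fixed measure. The only difference is cosmetic: the paper splits the exponent into cross and quadratic terms ($T_1$, $T_2$) and minorizes by a truncated Gaussian-type density supported on $\mathsf{C}_r$, whereas you bound the exponent directly and use the uniform measure on a fixed ball, which is if anything slightly cleaner.
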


\begin{proof}
We have that for any $x,\theta,\mu$.
\begin{equation}\label{eq:minor1}
Q_{\mu,\theta}^{\Delta}(x,dy) \geq C\exp\left\{-C(y-\{a_{\theta}(x,\mu)\Delta + x\})^{\top}
(y-\{a_{\theta}(x,\mu)\Delta + x\})
\right\}dy
\end{equation}
where $C$ is a positive and finite constant that depends on $\Delta$ but not $x,\theta,\mu$ and $dy$ is the $d-$dimensional Lebesgue measure.  Now, as in the proof of Lemma \ref{lem:drift}, applying 
the first mean value theorem the exponent of the R.H.S.~of \eqref{eq:minor1} is equal to
$$
-C\left(y^{\top}y + T_1 + T_2\right)
$$
where
\begin{eqnarray*}
T_1 & = & -2\left(\{\nabla a_{\theta}(\kappa x,\mu)\Delta+I_d\}x + \Delta a_{\theta}(0,\mu)\right)^{\top}y \\
T_2 & = & \left(\{\nabla a_{\theta}(\kappa x,\mu)\Delta+I_d\}x + \Delta a_{\theta}(0,\mu)\right)^{\top}\left(\{\nabla a_{\theta}(\kappa x,\mu)\Delta+I_d\}x + \Delta a_{\theta}(0,\mu) \right).
\end{eqnarray*} 
We will now seek to lower-bound $e^{-CT_1}$ and $e^{-CT_2}$ individually,  when $x\in\mathsf{C}_r$.

In the case of $T_1$,  by (A\ref{ass:1}) 2.~,  we know that for each $k\in\{1,\dots,d\}$
\begin{equation}\label{eq:minor2}
\sup_{\theta\in\Theta}\sup_{\mu\in\mathcal{P}(\mathbb{R}^d)}|\Delta a_{\theta}(0,\mu) ^{(k)}| \leq C
\end{equation}
where $C$ depends on $\Delta$.  Moreover,  by (A\ref{ass:1}) 2.~and Cauchy-Schwarz, for each $k\in\{1,\dots,d\}$
\begin{eqnarray}
\sup_{\theta\in\Theta}\sup_{\mu\in\mathcal{P}(\mathbb{R}^d)}\sup_{x\in\mathbb{R}^d}|
(\{\nabla a_{\theta}(\kappa x,\mu)\Delta+I_d\}x)^{(k)}| & \leq &  C (x^{\top}x)^{1/2}\nonumber\\
&\leq & C\left(\frac{r-1}{\rho}\right)^{1/2}\label{eq:minor3}
\end{eqnarray}
where the second follows as $x\in\mathsf{C}_r$ and $\rho$ is as in $V(x)$.  Due to the bounds in 
\eqref{eq:minor2}-\eqref{eq:minor3} it is straightforward to show that there exists a constant $0<C<+\infty$ independent of $x,\theta,\mu$ that
$$
e^{-CT_1} \geq  \exp\left\{C\varpi(y)^{\top}y\right\}\mathbb{I}_{\mathsf{C}_r}(y)
$$
where $\varpi(y) = (-\textrm{sgn}(y_1),\dots,-\textrm{sgn}(y_d)^{\top}$.

For the case $T_2$ the bounds \eqref{eq:minor2}-\eqref{eq:minor3} suffice to establish that 
there exists a constant $0<C<+\infty$ independent of $x,\theta,\mu$ such that
for $x\in\mathsf{C}_r$ 
$$
e^{-CT_2} \geq C.
$$
We have therefore shown that for $x\in\mathsf{C}_r$
$$
\inf_{\theta\in\Theta}\inf_{\mu\in\mathcal{P}(\mathbb{R}^d)} Q_{\mu,\theta}^{\Delta}(x,dy) \geq
C \exp\left\{-Cy^{\top}y\right\}\exp\left\{C\varpi(y)^{\top}y\right\}\mathbb{I}_{\mathsf{C}_r}(y)dy
$$
and from here the proof can be easily completed.
\end{proof}

\begin{lem}
Assume (A\ref{ass:1_new}, A\ref{ass:1}-\ref{ass:2}). Then for any $\Delta>0$ there exists a $(\zeta,C)\in(\mathbb{R}^+)^2$
such that for any $(\theta,\theta')\in\Theta^2$
$$
\sup_{\mu\in\mathcal{P}(\mathbb{R}^d)}\||Q_{\mu,\theta}^{\Delta}-Q_{\mu,\theta'}^{\Delta}|(e^V)\|_{e^V} \leq C\|\theta-\theta'\|^{\zeta}.
$$
\end{lem}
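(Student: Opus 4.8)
The plan is to exploit the fact that, since the diffusion coefficient $\sigma$ carries no $\theta$-dependence, the two kernels $Q_{\mu,\theta}^{\Delta}(x,\cdot)$ and $Q_{\mu,\theta'}^{\Delta}(x,\cdot)$ are, for each fixed $x$ and $\mu$, Gaussian laws $\mathcal{N}_d(m_\theta(x),\Delta\Sigma(x))$ and $\mathcal{N}_d(m_{\theta'}(x),\Delta\Sigma(x))$ with the \emph{same} covariance $\Delta\Sigma(x)$ (recall $\Sigma(x)=\sigma(x)\sigma(x)^{\top}$) and means $m_\theta(x)=x+\Delta a_\theta(x,\mu)$, $m_{\theta'}(x)=x+\Delta a_{\theta'}(x,\mu)$. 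Both are absolutely continuous with respect to Lebesgue measure with common normalizing constant, so the Hahn--Jordan decomposition of $Q_{\mu,\theta}^{\Delta}(x,\cdot)-Q_{\mu,\theta'}^{\Delta}(x,\cdot)$ has density $|g_{m_\theta}-g_{m_{\theta'}}|$, and the quantity to control becomes, for each $x$,
$$
\int_{\mathbb{R}^d} e^{V(y)}\,\bigl|g_{m_\theta}(y)-g_{m_{\theta'}}(y)\bigr|\,dy,
$$
which I would show is bounded by $C\|\theta-\theta'\|^{\zeta}e^{V(x)}$ uniformly in $x$ and $\mu$; taking the supremum over $x$ and then $\mu$ then gives the claim.

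Writing $g_{m}(y)=Z^{-1}\exp(-Q_m(y))$ with $Q_m(y)=\tfrac{1}{2\Delta}(y-m)^{\top}\Sigma(x)^{-1}(y-m)$, I would linearize via the elementary inequality $|e^{-a}-e^{-b}|\le|a-b|(e^{-a}+e^{-b})$ to obtain
$$
\bigl|g_{m_\theta}(y)-g_{m_{\theta'}}(y)\bigr|\le\bigl|Q_{m_\theta}(y)-Q_{m_{\theta'}}(y)\bigr|\,\bigl(g_{m_\theta}(y)+g_{m_{\theta'}}(y)\bigr).
$$
A direct expansion gives $Q_{m_\theta}(y)-Q_{m_{\theta'}}(y)=\tfrac{1}{2\Delta}(m_\theta-m_{\theta'})^{\top}\Sigma(x)^{-1}(m_\theta+m_{\theta'}-2y)$, so using the boundedness of $\Sigma(x)^{-1}$ from (A\ref{ass:1})\,4 and the at-most-linear growth of $a_\theta$ from (A\ref{ass:1})\,2 to bound $\|m_\theta+m_{\theta'}\|\le C(1+\|x\|)$, I get
$$
\bigl|Q_{m_\theta}(y)-Q_{m_{\theta'}}(y)\bigr|\le\frac{C}{\Delta}\,\|m_\theta-m_{\theta'}\|\,\bigl(1+\|x\|+\|y\|\bigr).
$$
Here the prefactor is controlled by (A\ref{ass:2}): since $m_\theta-m_{\theta'}=\Delta(a_\theta(x,\mu)-a_{\theta'}(x,\mu))$, the first bound of (A\ref{ass:2}) yields $\|m_\theta-m_{\theta'}\|\le C\|\theta-\theta'\|^{\zeta}$ uniformly in $x$ and $\mu$ for any fixed $\zeta\in(0,1)$, the logarithmic modulus being dominated by a H\"older one because $\Theta$ is bounded (A\ref{ass:1_new}).

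It then remains to show $\int e^{V(y)}(1+\|x\|+\|y\|)\,g_{m}(y)\,dy\le Ce^{V(x)}$ for $m\in\{m_\theta,m_{\theta'}\}$. For the part without the $\|y\|$ factor this is $(1+\|x\|)\,Q_{\mu,\theta}^{\Delta}(e^V)(x)$, which Lemma \ref{lem:drift} bounds by $(1+\|x\|)\exp\{\phi V(x)+b\}$ with $\phi<1$. For the part carrying $\|y\|$, I would complete the square exactly as in Lemma \ref{lem:comp_sq} to write $e^{V(y)}g_{m}(y)=Q_{\mu,\theta}^{\Delta}(e^V)(x)\,\tilde g(y)$ with $\tilde g$ a genuine Gaussian density whose mean is linear in $m$, hence in $x$; a first-moment estimate then gives $\int\|y\|\tilde g(y)\,dy\le C(1+\|x\|)$, so this term is again $\le C(1+\|x\|)\exp\{\phi V(x)+b\}$. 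Absorbing the polynomial factor through $(1+\|x\|)\le C_\epsilon e^{\epsilon(V(x)-1)}$ and choosing $\epsilon$ with $\phi+\epsilon<1$ produces the desired $Ce^{V(x)}$. Combining with the previous paragraph gives $|Q_{\mu,\theta}^{\Delta}-Q_{\mu,\theta'}^{\Delta}|(e^V)(x)\le C\|\theta-\theta'\|^{\zeta}e^{V(x)}$.

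The main obstacle is this last step: bounding the weighted first-moment integral $\int e^{V(y)}\|y\|g_m(y)\,dy$ uniformly in $\theta,\mu$, and in particular ensuring that the exponential factor produced by completing the square is the \emph{same} $e^{\phi V(x)}$, $\phi<1$, that Lemma \ref{lem:drift} delivers, so that the stray polynomial factor $(1+\|x\|)$ can be absorbed into the leftover budget $e^{(1-\phi)V(x)}$. This also requires $\rho$ in $V$ to be small enough for every Gaussian integral to converge, which is inherited from the standing choice of $\rho$ fixed in Lemma \ref{lem:drift}.
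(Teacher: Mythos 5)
Your argument is correct, and it reaches the bound by a recognizably different organization than the paper's. The paper keeps the weight attached to the kernel, writing $e^{V(y)}Q_{\mu,\theta}^{\Delta}(x,dy)=\Psi_{\mu,\theta}^{\Delta}(x)M_{\mu,\theta}^{\Delta}(x,y)\,dy$ with $\Psi_{\mu,\theta}^{\Delta}=Q_{\mu,\theta}^{\Delta}(e^V)$ and $M_{\mu,\theta}^{\Delta}$ a tilted Gaussian density, and then splits the Hahn--Jordan integral over the positive/negative sets into two pieces: $T_1$, the difference of tilted shapes $M_{\mu,\theta}^{\Delta}-M_{\mu,\theta'}^{\Delta}$ weighted by $\Psi_{\mu,\theta}^{\Delta}$, controlled through the ratio $M_{\mu,\theta}^{\Delta}/M_{\mu,\theta'}^{\Delta}$ (an exponent-difference estimate that produces a factor $e^{C|y|}$ and hence requires a Gaussian moment-generating-function computation), and $T_2$, the difference of the normalizers $\Psi_{\mu,\theta}^{\Delta}-\Psi_{\mu,\theta'}^{\Delta}$. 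You instead work with the raw Gaussian densities (same covariance, since $\sigma$ carries no $\theta$-dependence), observe that the Hahn--Jordan density is simply $|g_{m_\theta}-g_{m_{\theta'}}|$, and apply the symmetric linearization $|e^{-a}-e^{-b}|\le|a-b|(e^{-a}+e^{-b})$; this replaces the ratio/MGF step by a mere first-moment estimate for the tilted Gaussian of Lemma \ref{lem:comp_sq}, is somewhat more elementary, avoids the positive/negative-set bookkeeping, and handles the shape and normalization discrepancies in one shot rather than as separate terms $T_1$ and $T_2$. Both routes ultimately rest on the same three ingredients playing the same roles: (A\ref{ass:2}) together with boundedness of $\Theta$ from (A\ref{ass:1_new}) to convert the $t\log t$ modulus into a H\"older one (any $\zeta\in(0,1)$ works, matching the paper's ``as $\Theta$ is bounded'' step); Lemma \ref{lem:comp_sq} to evaluate the $e^V$-weighted Gaussian integrals; and Lemma \ref{lem:drift} with $\phi<1$ to absorb the leftover growth in $x$. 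Your final absorption $(1+\|x\|)\le C_\epsilon e^{\epsilon(V(x)-1)}$ with $\phi+\epsilon<1$ is exactly the role the factor $e^{(\phi-1)V(x)}$ plays at the end of the paper's proof (where it swallows an even larger $e^{C|x|}$ term coming from the MGF bound and the mean value theorem), and your closing remark that $\rho$ must be the value fixed in Lemma \ref{lem:drift}, so that all tilted Gaussian integrals converge with the same $\phi$, is precisely the right consistency check.
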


\begin{proof}
Define $\mathsf{P}$ and $\mathsf{P}^c$ as the (essentially) unique decomposition of $\mathbb{R}^d$
such that for any $\mathsf{A}\in\mathcal{B}(\mathbb{R}^d)$ with $A\subseteq\mathsf{P}$ we have
$Q_{\mu,\theta}^{\Delta}(\mathbb{I}_\mathsf{A})(x)-Q_{\mu,\theta'}^{\Delta}(\mathbb{I}_\mathsf{A})(x)\geq 0$,
and for $A\subseteq\mathsf{P}^c$ 
$Q_{\mu,\theta}^{\Delta}(\mathbb{I}_\mathsf{A})(x)-Q_{\mu,\theta'}^{\Delta}(\mathbb{I}_\mathsf{A})(x)< 0$
and note that we can allow $\mathsf{P}$ to depend on $x,\mu,\theta,\theta',\Delta$.
Then define
\begin{eqnarray*}
\Psi_{\mu,\theta}^{\Delta}(x) & := &
\frac{\tilde{\sigma}(x)^{-d/2}}{\left(\tfrac{1}{\tilde{\sigma}(x)}-2\Delta\rho\right)^{d/2}}
\exp\left\{\frac{\rho\tilde{\sigma}(x)}{1-2\Delta\rho\tilde{\sigma}(x)}
(a_{\theta}(x,\mu)\Delta + x)^{\top}(a_{\theta}(x,\mu)\Delta + x)
+ 1
\right\}\\
M_{\mu,\theta}^{\Delta}(x,y) & := &
\frac{\left(\tfrac{1}{\tilde{\sigma}(x)}-2\Delta\rho\right)^{d/2}}{(2\pi\Delta)^{d/2}}
\exp\Bigg\{-\frac{1-2\Delta\rho\tilde{\sigma}(x)}{2\Delta\tilde{\sigma}(x)}
\left(y-\frac{1}{1-2\Delta\rho\tilde{\sigma}(x)}(a_{\theta}(x,\mu)\Delta + x)\right)^{\top}\times \\ & &
\left(y-\frac{1}{1-2\Delta\rho\tilde{\sigma}(x)}(a_{\theta}(x,\mu)\Delta + x)\right)
\Bigg\}
\end{eqnarray*}
and we remark that 
\begin{equation}\label{eq:psi_v_link}
\Psi_{\mu,\theta}^{\Delta}(x)=Q_{\mu,\theta}^{\Delta}(e^V)(x).
\end{equation}
Then we have that
$$
|Q_{\mu,\theta}^{\Delta}-Q_{\mu,\theta'}^{\Delta}|(e^V)(x) = T_1 + T_2
$$
where
\begin{eqnarray*}
T_1 & = & 2\Psi_{\mu,\theta}^{\Delta}(x)\int_{\mathbb{R}^d}
\left\{M_{\mu,\theta}^{\Delta}(x,y)-M_{\mu,\theta'}^{\Delta}(x,y)\right\}\mathbb{I}_{\mathsf{P}}(y)dy \\
T_2 & = & \left\{\Psi_{\mu,\theta}^{\Delta}(x)-\Psi_{\mu,\theta'}^{\Delta}(x)\right\}\int_{\mathbb{R}^d}
M_{\mu,\theta'}^{\Delta}(x,y)\left\{\mathbb{I}_{\mathsf{P}}(y)-\mathbb{I}_{\mathsf{P}^c}(y)\right\}dy.
\end{eqnarray*}
The proof will now seek to appropriately control the terms $T_1$ and $T_2$.

For $T_1$ we have that the integrand is
$$
\left\{M_{\mu,\theta}^{\Delta}(x,y)-M_{\mu,\theta'}^{\Delta}(x,y)\right\}\mathbb{I}_{\mathsf{P}}(y) =
\left\{\frac{M_{\mu,\theta}^{\Delta}(x,y)}{M_{\mu,\theta'}^{\Delta}(x,y)}-1\right\}\mathbb{I}_{\mathsf{P}}(y)
M_{\mu,\theta'}^{\Delta}(x,y)
$$
and we shall focus on the ratio of the R.H.S.~of the above equation. Clearly,  we have
\begin{eqnarray*}
\frac{M_{\mu,\theta}^{\Delta}(x,y)}{M_{\mu,\theta'}^{\Delta}(x,y)} & = & 
\exp\Bigg\{-\frac{1-2\Delta\rho\tilde{\sigma}(x)}{2\Delta\tilde{\sigma}(x)}\Bigg[
\left(y-\frac{1}{1-2\Delta\rho\tilde{\sigma}(x)}(a_{\theta}(x,\mu)\Delta + x)\right)^{\top}\times \\ & &
\left(y-\frac{1}{1-2\Delta\rho\tilde{\sigma}(x)}(a_{\theta}(x,\mu)\Delta + x)\right) - 
\left(y-\frac{1}{1-2\Delta\rho\tilde{\sigma}(x)}(a_{\theta'}(x,\mu)\Delta + x)\right)^{\top}
-
\\ & & 
\left(y-\frac{1}{1-2\Delta\rho\tilde{\sigma}(x)}(a_{\theta'}(x,\mu)\Delta + x)\right)^{\top}\Bigg]
\Bigg\}.
\end{eqnarray*}
The exponent can written as $T_3 + T_4$ where
\begin{eqnarray*}
T_3 & = & -\frac{1}{\Delta\tilde{\sigma}(x)}
\left\{(a_{\theta'}(x,\mu)\Delta + x)^{\top}-
(a_{\theta}(x,\mu)\Delta + x)^{\top}\right\}y\\
T_4 & = & -
\frac{1}{2\Delta\tilde{\sigma}(x)\left(1-2\Delta\rho\tilde{\sigma}(x)\right)}
\left[
(a_{\theta}(x,\mu)\Delta + x)^{\top}
(a_{\theta}(x,\mu)\Delta + x)
-
(a_{\theta'}(x,\mu)\Delta + x)^{\top}
(a_{\theta'}(x,\mu)\Delta + x)
\right].
\end{eqnarray*}
We will now bound $e^{T_3}$ and $e^{T_4}$ in turn. For $T_3$ we have
$$
T_3 = -\frac{1}{\tilde{\sigma}(x)}
\left\{
a_{\theta'}(x,\mu)^{\top} - 
a_{\theta}(x,\mu)^{\top}
\right\}y
$$
and applying (A\ref{ass:1}) 4.~and (A\ref{ass:2}) it follows that 
$$
T_3 \leq C\|\theta-\theta'\|\log\left(\|\theta-\theta'\|\right)|y|.
$$
For $T_4$ we have
$$
T_4 = -\frac{1}{2\tilde{\sigma}(x)\left(1-2\Delta\rho\tilde{\sigma}(x)\right)}
\left(
2\left\{
a_{\theta}(x,\mu)^{\top} - 
a_{\theta'}(x,\mu)^{\top}
\right\}y
+ \Delta \left\{
a_{\theta}(x,\mu)^{\top}a_{\theta}(x,\mu) - 
a_{\theta'}(x,\mu)^{\top}a_{\theta'}(x,\mu)
\right\}
\right).
$$
Again, one can apply (A\ref{ass:1}) 4.~and (A\ref{ass:2}) to obtain that
$$
T_4 \leq C\|\theta-\theta'\|\log\left(\|\theta-\theta'\|\right)(1+|y|).
$$
Putting together the bounds on $T_3$ and $T_4$ we can deduce that the integrand of $T_1$ is upper-bounded by
$$
\|\theta-\theta'\|^C\exp\left\{C|y|\right\}M_{\mu,\theta'}^{\Delta}(x,y)
$$
as $\Theta$ is bounded.  Therefore,  by computing exactly the (conditional) expectation of $\exp\left\{C|y|\right\}$
w.r.t.~the Markov kernel $M_{\mu,\theta'}^{\Delta}(x,y)dy$ we obtain that
$$
T_1\leq C \Psi_{\mu,\theta}(x)\|\theta-\theta'\|^C
\exp\left\{
C\left(|a_{\theta'}(x,\mu)\Delta + x| + 1\right)
\right\}.
$$
where $C$ could depend on $\Delta$ but it does not depend on $x,\theta,\theta',\mu$.
Recalling \eqref{eq:psi_v_link},  by Lemma \ref{lem:drift} there exists a $\rho,\phi,C$, with $0<\phi<1$ that depend on $\Delta$ but not on
$x,\theta,\theta',\mu$ such that
$$
T_1\leq Ce^{\phi V(x) + C}\|\theta-\theta'\|^C
\exp\left\{
C\left(|a_{\theta'}(x,\mu)\Delta + x| + 1\right)
\right\}. 
$$
Then dividing by $e^{V(x)}$ we have
$$
\frac{T_1}{e^{V(x)}}\leq C e^{(\phi-1) V(x)}\|\theta-\theta'\|^C
\exp\left\{
C\left(|a_{\theta'}(x,\mu)\Delta + x| + 1\right)
\right\}.
$$
by using the first mean value theorem (see \eqref{eq:fmvt}) it then follows that for any $x,\theta,\theta',\mu$
$$
\frac{T_1}{e^{V(x)}} \leq C\|\theta-\theta'\|^C
$$
where, again, $C$ does not depend on $x,\theta,\theta',\mu$.  The proof for $T_2$ is very similar,  that is, using almost the same approach one can verify that 
$$
\frac{T_2}{e^{V(x)}} \leq C\|\theta-\theta'\|^C.
$$
As many of the calcuations are repeated, the entire argument is omitted for brevity.
This concludes the proof. 
\end{proof}

\subsection{Proof of Theorem \ref{theo:main_thm}}\label{app:main_theo}

\begin{proof}
The proof is similar to that of \cite[Theorem 4.1]{ub_grad_new}.  We have either assumed or proved all the assumptions (A1-9) used to prove \cite[Theorem 4.1]{ub_grad_new} in the previous sections of this appendix.  Now we note that although the algorithms that we use are different,  the fact that our results (or assumptions) in this paper are uniform in the input probability measure(s) allows us to conclude with an almost identical proof.  
\end{proof}

\end{document}